\tikzstyle{normalNode}=[draw=black, fill=gray, circle, minimum size=1em]
\tikzstyle{labeledNode}=[normalNode, fill=white, inner sep=0.5mm]
\tikzstyle{facNode}=[normalNode, fill=white, rectangle]
\tikzstyle{normalEdge}=[black, very thick, >=stealth]
\tikzstyle{dashedEdge}=[dashed, black, very thick, >=stealth]
\newtheorem{theorem}{Theorem}
\newtheorem{lemma}[]{Lemma}
\newtheorem{corollary}[]{Corollary}
\newtheorem{claim}[]{Claim}
\newcommand{\FLext}{Facility Location}
\newcommand{\UFLext}{Uncapacitated Facility Location}
\newcommand{\UFL}{UFL}
\newcommand{\CFLext}{Capacitated Facility Location}
\newcommand{\CFL}{CFL}
\newcommand{\VRext}{Vehicle Routing}
\newcommand{\CMDVRext}{Capacitated Multi-Depot Vehicle Routing}
\newcommand{\CMDVR}{CMDVR}
\newcommand{\LRext}{Location Routing}
\newcommand{\LRVDCext}{Capacitated Location Routing}
\newcommand{\LRVDC}{CLR}
\newcommand{\LRVC}{ULR}
\newcommand{\clients}{C}
\newcommand{\depots}{F}
\newcommand{\OPT}{\operatorname{OPT}}
\newcommand{\children}{K_{T'}}
\author{Felipe Carrasco Heine \and Antonia Demleitner \and Jannik Matuschke}
\title{Bifactor Approximation for Location Routing with Vehicle and Facility Capacities}
\begin{document}

\maketitle

\begin{abstract} 
Location Routing is a fundamental planning problem in logistics, in which strategic location decisions on the placement of facilities (depots, distribution centers, warehouses etc.) are taken based on accurate estimates of operational routing costs.  
    We present an approximation algorithm, i.e., an algorithm with proven worst-case guarantees both in terms of running time and solution quality, for the general capacitated version of this problem, in which both vehicles and facilities are capacitated.
    Before, such algorithms were only known for the special case where facilities are uncapacitated or where their capacities can be extended arbitrarily at linear cost. Previously established lower bounds that are known to approximate the optimal solution value well in the uncapacitated case can be off by an arbitrary factor in the general case.
    We show that this issue can be overcome by a bifactor approximation algorithm that may slightly exceed facility capacities by an adjustable, arbitrarily small margin while approximating the optimal cost by a constant factor. 
    In addition to these proven worst-case guarantees, we also assess the practical performance of our algorithm in a comprehensive computational study, showing that the approach allows efficient computation of near-optimal solutions for instance sizes beyond the reach of current state-of-the-art heuristics.
\end{abstract}

\section{Introduction}
\label{sec:intro}

Location decisions play an important role in transport logistics. The placement of distribution centers, warehouses, and depots shapes the topology of logistics networks and has a major influence on overall logistics costs.
When making strategic decisions on the location of such facilities, both the costs for opening and maintaining the facilities as well as the anticipated transportation costs for serving clients from these facilities have to be taken in consideration.
In the classic \FLext{} problem, one of the most widely studied models in location analysis, these latter costs are assumed to be linear, with every client receiving a dedicated connection to a facility.
While this simplification may be appropriate in some situations (in particular outside logistics), in other application contexts, such as the design of local distribution networks, the transportation cost strongly depends on the actual tours that vehicles take to serve the clients. 

It has long been observed that basing location decisions on oversimplified cost models can lead to inferior solutions with costs significantly exceeding those of an optimal solution \citep{Maranzana1964,Webb1968,SalhiRand1989,salhi1999consistency}. This motivates the study of \emph{\LRext},
an integrated approach that combines \FLext{} with a \VRext{} problem:
Given a finite set of possible facility locations, the task is to determine a subset of these facilities that is to be opened and to plan tours originating from the open facilities in order to supply clients and satisfy their demand. These tours have to respect capacity constraints for vehicles (total demand served by a single tour) and facilities (total demand served by a facility). All of this should be done at minimum total cost, which is the sum of opening costs for the facilities and the routing cost, i.e., the total length of all tours.

By combining \FLext{} and \VRext{}, each of which is an $N\!P$-hard problem in its own right, \LRext{} constitutes a computationally challenging problem.
Numerous heuristic and exponential-time exact approaches have been proposed in literature; see the recent surveys by \cite{ProdhonPrins2014} and \cite{DrexlSchneider2015}.
In this paper, we study \emph{approximation algorithms} for \LRext{}, that is, algorithms that come with a proven worst-case guarantee both on their running time and the quality of the produced solution, measured in terms of deviation from the cost of an optimal solution.
In particular, we provide a \emph{bifactor} approximation algorithm, which computes solutions in which the capacity of any facility can be exceeded by at most a small, adjustable factor while the cost is within a constant factor of the optimal solution.
Previously, constant-factor approximation guarantees have only been known for variants of the problem in which facilities are uncapacitated~\citep{RaviSinha2006,HarksKonigMatuschke2013} or for soft-capacitated variants, where facility capacities can be extended arbitrarily at linear cost \citep{chen2009approximation,chen2009cost}.

\subsubsection*{Structure of this paper}
The remainder of this paper is structured as follows. In the rest of \cref{sec:intro}, we give a formal problem definition, followed by a discussion of related work and our contributions. In \cref{sec:bounds}, we discuss two lower bounds on the optimal solution value.
In \cref{sec:algorithm} we present and analyze our approximation algorithm, establishing our main result. \cref{sec:variants} presents heuristic variations of this algorithm, which will be of particular importance for the experiments implemented in \cref{sec:computational}. Finally, \cref{sec:conclusions} contains our conclusions.

\subsection{Formal Problem Definition}
\label{sec:problem}
The \emph{\LRVDCext{} (\LRVDC)} is defined as follows:
As input, we are given a set $\depots$ of possible facility locations, as well as a set $\clients$ of clients who need to be served.
Each facility $w \in \depots$ has an opening cost $f(w) \geq 0$ and a limited capacity $u(w) \in \mathbb{Z}_+$. Every client $v \in \clients$ has an individual demand $d(v) \in \mathbb{Z}_{+}$ that needs to be satisfied. They are served by an unlimited fleet of identical vehicles, each having capacity $\bar{u} \in \mathbb{Z}_+$. 
There further is a metric\footnote{In particular, for every $u, v, w \in V$, it holds that $c(u,v) \leq c(u, w) + c(w, v)$.} distance function $c : V \times V \rightarrow \mathbb{R}_+$, where $V := \depots \cup \clients$, so that $c(v, w)$ denotes the cost for traveling from $v$ to $w$.

A \emph{tour} $T$ consists of a facility $w_T$, a sequence of clients $v^1_T, \dots, v^k_T$, and \emph{service values} $x_T(v)$ for $v \in V$ with $x_T(v) = 0$ for all $v \in \clients \setminus \{v^1_T, \dots, v^k_T\}$.
The \emph{cost} of tour $T$ is $c(T) := c(w_T, v^1_T) + \sum_{i = 1}^{k-1} c(v^i_T, v^{i+1}_T) + c(v^k_T, w_T)$, i.e., the total distance covered by a vehicle starting at $w_T$, visiting the clients in order of the sequence, and returning to $w_T$.

The goal is to decide on a set of facilities $F' \subseteq \depots$ to open and a set of tours~$\mathcal{T}$ such that
\begin{enumerate}
	\item each tour originates from an open facility:\hfill
	$w_T \in F'$ for every $T \in \mathcal{T}$,\label{c:open-fac}
	\item the demand of every client must be served entirely by the tours visiting it:\\
	\hphantom{x} \hfill
	$\sum_{T \in \mathcal{T}} x_T(v) = d(v)$ for all $v \in \clients$,\label{c:client-demand}
	\item the total demand served by any tour is at most the vehicle capacity:\\
	\hphantom{x} \hfill $\sum_{v \in \clients} x_T(v) \leq \bar{u}$ for all $T \in \mathcal{T}$,\label{c:vehicle-cap}
	\item the total demand served from any facility is at most its capacity:\\
	\hphantom{x} \hfill $\sum_{T \in \mathcal{T} : w_T = w} \sum_{v \in \clients} x_T(v) \leq u(w)$ for all $w \in \depots$,\label{c:depot-cap}
\end{enumerate}
minimizing the total cost $\sum_{w \in F'} f(w) + \sum_{T \in \mathcal{T}} c(T)$.

\medskip

\LRVDC{} is a combination of two fundamental optimization problems: \CMDVRext{} and \CFLext{}. Both of which can be seen as special cases of \LRVDC{}.

\medskip

\subsubsection*{{\CMDVRext{}} (\CMDVR{})} The special case where $f \equiv 0$ is known as Capacitated Multi-Depot Vehicle Routing. Note that in this case, it can be assumed that all facilities are opened without loss of generality and thus only the routing aspect is relevant. 

\medskip

\subsubsection*{{\CFLext{}} (\CFL{})} 
In the Capacitated Facility Location problem, we are given the same input as in \LRVDC{}, with the omission  of a vehicle capacity. A feasible solution consists of a set of facilities $F' \subseteq \depots$ to be opened and service values $x(v, w)$ for each client $v \in \clients$ and $w \in F'$ such that $\sum_{w \in F'} x(v, w) = d(v)$ for each $v \in \clients$ and $\sum_{v \in \clients} x(v, w) \leq u(w)$ for all $w \in \depots$. The cost of such a solution is $\sum_{w \in F'} f(w) + \sum_{v \in \clients} c(v, w) x(v, w)$.
This problem is equivalent to an instance of \LRVDC{} in which $\bar{u} = 1$ and $c$ is scaled down by a factor of $\frac{1}{2}$.
Indeed, it is easy to check that for this special case of \LRVDC, there always exists an optimal solution in which every tour corresponds to a round trip visiting only a single client, serving a single unit of demand.
Conversely, because client demands and facility capacities are integer, there exists an optimal solution to the \CFL{} instance in which all $x(v, w)$ are integer.
Interpreting $x(v, w)$ as the number of dedicated round-trips from facility $w$ to client $v$ establishes a one-to-one correspondence among optimal solutions of these special types for the \CFL{} and \LRVDC{} instance, respectively, of same cost.

\medskip

In many application contexts, there are additional desirable properties for location routing solutions that we did not include as hard constraints in our definition of \LRVDC{}. Two noteworthy examples are the single-sourcing and single-tour properties.

\subsubsection*{Single-sourcing/single-tour property}
We say that a solution $(F', \mathcal{T})$ fulfills the \emph{single-sourcing property} if for every client $v \in \clients$ there is a facility $w_v$ such that $w_T = w_v$ for all $T \in \mathcal{T}$ with $x_T(v) > 0$. Furthermore, $(F', \mathcal{T})$ fulfills the \emph{single-tour property}, if for every client $v \in \clients$ there is a unique tour $T \in \mathcal{T}$ with $x_T(v) = d(v)$.

\subsection{Previous Results on Approximating Location Routing and Related Problems}
\label{sec:literature}

We give a brief review of literature on \FLext{}, \VRext{}, and \LRext{}, with a focus on approximation algorithms. An \emph{$\alpha$-approximation algorithm} for an optimization problem is an algorithm that, given an instance of the problem, computes in polynomial time in the size of the input a solution whose cost is at most $\alpha$ times the cost of an optimal solution to that instance; in this context, $\alpha \geq 1$ is called the \emph{approximation factor}. 

The study of approximation algorithms is motivated by the fact that many fundamental optimization problems are NP-hard, leaving little hope for solution methods that are both efficient and exact. Besides their potential to serve as practically usable heuristics, with the additional benefit of an a priori guarantee on the quality of the produced solutions in the worst case, the study of approximation algorithms also yields insights in the inherent complexities of the corresponding optimization problems and the quality of lower bounds on the optimal solution value. For an introduction to the topic see the textbook by~\citet{WilliamsonShmoys2011}.

\subsubsection*{Facility Location}

Facility Location problems have been a major focus in approximation algorithms over the last decades. The approximability of the \UFLext{} problem (\UFL{}), in which there is no upper bound on the demand served by any individual facility, is very well understood in particular.
A large variety of techniques have been shown to be successful in obtaining constant-factor approximations for this setting, including greedy algorithms~\citep{jain2003greedy}, LP-based methods~\citep{ByrkaAardal2010}, and local search~\citep{KorupoluPlaxtonRajaraman2000}, culminating in the currently best-known approximation ratio of $1.48$ \citep{li20131}.

The capacitated version turns out to be much more challenging.
For many years, approximation guarantees could only be proven for a local search with an add/swap/remove neighborhood \citep{KorupoluPlaxtonRajaraman2000}, yielding approximation ratios of $3$ for the case of uniform capacities~\citep{AggarwalLouisBansalGargGuptaGuptaJain2013} and $5$ for arbitrary capacities~\citep{BansalGargGupta2012}.

The difficulty in approximating \CFL{} can be in part be attributed to the fact that the natural LP relaxation for the problem turns out to yield only very weak lower bounds on the optimal solution value in the capacitated setting, with an unbounded gap between optimal integer and fractional solutions.
Only recently, \citet{AnSinghSvensson2017} derived an LP-based constant-factor approximation for \CFL{}, using a considerably more involved LP relaxation.
Moreover, even in the case of uniform capacities, it is NP-hard to decide if a given instance of \CFL{} admits a feasible solution fulfilling the single-sourcing property, a consequence of a straightforward reduction from the Bin Packing problem~\citep{LeviShmoysSwamy2012} that also carries over to \LRVDC{}.

\subsubsection*{Vehicle Routing}
Due to their wide range of applications, the \VRext{} problems have attracted considerable attention in operations research; see the textbooks by \cite{TothVigo2003} and \cite{golden2008vehicle} for an overview.
The fundamental \emph{Traveling Salesperson problem} (TSP), corresponding to the case of a single facility (called \emph{depot} in most of vehicle routing literature) with uncapacitated vehicles, serves as a benchmark both in the design of practical routing algorithms as well as the theoretical study of algorithms in computer science.
In terms of approximation algorithms, the threshold set by the simple and elegant $\nicefrac{3}{2}$-approximation by \cite{christofides1976worst} for TSP has only recently been broken by a considerably more intricate design~\citep{karlin2021slightly}.
For the case of multiple facilities and capacitated vehicles, \citet{LiSmichilevi1990} devised a tour-partitioning technique that can be used to turn any $\rho$-approximation for TSP into a $(2 + \rho)$-approximation.
To the best of our knowledge, no approximation results are known for Multi-Depot Vehicle Routing with capacitated depots.

\subsubsection*{Location Routing}
\citet{RaviSinha2006} paved the way to approximating \LRext{} problems with capacitated vehicles by studying the closely related \emph{Capacitated-Cable Facility Location} (CCFL) problem. This problem corresponds to \LRVDC{} without facility capacities, except that clients are not served by tours but are connected to open facilities via trees of capacitated cables.
\citet{RaviSinha2006} showed that two lower bounds on the optimal cost in a given CCFL instance can be derived by computing solutions to appropriately defined instances of \UFL{} and the Steiner Tree problem, respectively.
Using this insight, they derived a $(\rho_{\text{\UFL{}}} + \rho_{\text{ST}})$-approximation for CCFL, where $\rho_{\text{\UFL{}}}$ and $\rho_{\text{ST}}$ denote approximation factors of algorithms for \UFL{} and Steiner Tree, respectively.
Based on their framework, \citet{HarksKonigMatuschke2013} showed how to obtain a $4.32$-approximation for \LRext{} with capacitated vehicles but uncapacitated facilities.
Our algorithm, discussed below, combines this framework with an LP-rounding scheme to balance the load at individual facilities to obtain a bifactor approximation for \LRVDC{}.

\citeauthor{RaviSinha2006}'s (\citeyear{RaviSinha2006}) article concludes with the question whether it is possible to obtain constant-factor approximation results for variants of CCFL with additional constraints, mentioning facility capacities as a notable example of practical importance. \cite{chen2009approximation,chen2009cost} give a partial answer to this question by studying two \emph{soft-capacitated} variants of CCFL and Location Routing, respectively:
The \emph{Soft-capacitated Facility Location and Cable Installation} (SC-FLCI) problem corresponds to CCFL, in which arbitrarily many copies of each facility can be opened, each copy costing $f(w)$ and being able to serve a demand of $u(w)$.
The \emph{Access Network Design} (AND) problem, motivated by a problem in telecommunication network design, corresponds to location routing in which both vehicles and facilities are uncapacitated, but in which an additional cost $a(w)$ is incurred for each unit of demand served by each facility $w \in F$, representing a linear cost for installing sufficient capacity at each facility. Note that in both problems, no hard upper bound on the total amount of demand served by any facility exists---hence the name ``soft-capacitated''.
\citeauthor{chen2009approximation} provide a 19.84-approximation for SC-FLCI~(\citeyear{chen2009approximation}) and a 12-approximaiton for AND~(\citeyear{chen2009cost}), both based on a primal-dual approach.

\subsection{Our Contribution}
In this paper, we study approximation algorithms for \LRVDC{} with uniform vehicle capacities and arbitrary facility capacities.
We start by pointing out that, contrary to the case without facility capacities, the natural adaptation of the combined tree/facility-location lower bound by \cite{RaviSinha2006} to the case with facility capacities can be arbitrarily far off from the value of an optimal solution. 

Motivated by this observation and the fact that computing feasible solutions fufilling the single-sourcing property is $N\!P$-hard, we turn our attention to bifactor approximation:
Our main result is an algorithm that given a \LRext{} instance and a parameter $\varepsilon \in (0, 1]$, computes in polynomial time a solution in which every facility $w$ receives a load of at most $u(w) + \varepsilon \bar{u}$, where $u(w)$ is the capacity of the facility and $\bar{u}$ is the uniform vehicle capacity.
The cost of this solution is bounded by $4 + \nicefrac{2\alpha}{\varepsilon}$ times that of an optimal solution to the original instance, where $\alpha$ is the approximation guarantee of an algorithm used to compute the \CFL{} lower bound.\footnote{The current best known approximation factors for \CFL{} are $\alpha = 5$ for the general case~\citep{BansalGargGupta2012} and $\alpha = 3$ for the case of uniform facility capacities~\citep{AggarwalLouisBansalGargGuptaGuptaJain2013}, respectively.}
For the special case of Capacitated Multi-Depot Vehicle Routing, we can further assume $\alpha = 1$, obtaining a factor of $4 + \nicefrac{2}{\varepsilon}$.
The solutions produced by our algorithms furthermore fulfill the single-sourcing and the stronger single-tour property (i.e., every client is served entirely by a single visit of a vehicle), as long as all client demands are bounded by $\varepsilon \bar{u}$. 

Our algorithm uses an adaptation of the framework by \citet{RaviSinha2006} to cluster the clients into groups. It then assigns these groups to open facilities via a rounding scheme for a linear program that balances the load at individual facilities. The approximation guarantee is proven using the aforementioned tree/facility-location lower bounds, and as a consequence, solutions with cost close to these lower bounds exist and can be computed when relaxing facility capacities by an arbitrarily small factor. 

We remark that, in many application contexts, typical facility capacities are at least an order of magnitude larger than vehicle capacities, and so the violation of facility capacities caused by our algorithm is relatively small even when $\varepsilon$ is set to $1$.
It is also important to point out that, differently from the soft-capacitated variants of the problem discussed earlier, the total demand served by any facility $w \in F$ is strictly limited the a priori upper bound $u(w) + \varepsilon \bar{u}$.
This makes our results applicable to settings in which small extensions of facility capacities are admissible but larger extensions (which are assumed to be possible in the soft-capacitated setting) are impossible, e.g., due to physical limitations.

Complementing our theoretical results, we devise several heuristic modifications to the algorithm that improve its practical performance. In particular, by replacing the aforementioned load-balancing linear program by an integer program of moderate size, we show how our framework can be used to obtain solutions that strictly respect the original facility capacities while still achieving a comparably low cost.
We analyze the empirical performance of the algorithm with and without heuristic improvements in an extensive computational study on different sets of benchmark instances from literature and additional newly generated instances.
Clearly outperforming its theoretical worst-case guarantee, the algorithm is capable of computing near-optimal solutions that either slightly exceed facility capacities (when using the original polynomial-time variant) or strictly respect these capacities (when using the integer program) in a fraction of the running time necessary for exact or other heuristic approaches.

\section{Tree and Facility-Location Lower Bounds}
\label{sec:bounds}

In the following we discuss two combinatorial lower bounds introduced by~\citet{RaviSinha2006} for the capacitated cable facility-location problem in the slightly adapted form used  by \citet{HarksKonigMatuschke2013} for \LRext{} with capacitated vehicles and uncapacitated facilities.
Both bounds are straightforward to adapt to the case with facility capacities.
Throughout the rest of the paper, we will let $\OPT$ denote the cost of an optimal solution for a \LRVDC{} instance~(the instance in question will always be clear from the context).

\subsection{Spanning Tree Lower Bound}
The first bound is based on a minimum spanning tree---it ignores both vehicle and facility capacities and can be applied to \LRVDC{} without adaptation; see Lemma~2 by \citet{HarksKonigMatuschke2013} for a proof.

\begin{lemma} \label{lem:mst_lb}
For a given \LRVDC{} instance, consider the graph $G = (V \cup \{r\}, E)$ with $E := \{ \{r,w\}: w \in \depots \} \cup \{ \{v, w\} : v \in \clients, w \in \depots\} \cup \{ \{v, v'\} : v, v' \in \clients, v \neq w\}$. Define costs $c'(r, w) = 0$, $c'(v, w) = c(v, w) + \frac{1}{2} f(w)$ for all $v \in \clients$, $w \in \depots$, and $c'(v, v') = c(v, v')$ for all other $\{v, v'\} \in E$. Let $T'$ be a minimum spanning tree in $G$ with respect to weights $c'$. Then, $c'(T') \leq \OPT$.
\end{lemma}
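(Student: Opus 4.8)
The plan is to reduce the statement to the construction of a single connected spanning subgraph of $G$. Since all weights $c'$ are nonnegative and $T'$ is a minimum spanning tree, it suffices to exhibit \emph{some} subgraph $H \subseteq G$ that is connected and spans $V \cup \{r\}$ and satisfies $c'(H) \le \OPT$: any such subgraph contains a spanning tree of no larger total weight, so $c'(T') \le c'(H) \le \OPT$. I would build $H$ directly from an optimal \LRVDC{} solution $(F', \mathcal{T})$. First, include every root edge $\{r, w\}$ with $w \in \depots$; these all have $c'(r,w) = 0$, contribute nothing to $c'(H)$, and connect $r$ to every facility (open or not). In particular all non-open facilities are handled at zero cost, and it remains only to attach the clients.

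For each open facility $w \in F'$, let $T_1, \dots, T_m$ be the tours with $w_{T_j} = w$ (in arbitrary order), each with client sequence $v^1_{T_j}, \dots, v^{k_j}_{T_j}$. I would connect all clients served from $w$ by chaining these tours into a single path $P_w$: start at $w$, move to the first client of $T_1$ along the facility--client edge $\{w, v^1_{T_1}\}$, traverse the clients of each tour along its own client--client edges, and bridge the last client of $T_j$ to the first client of $T_{j+1}$ by the direct client--client edge $\{v^{k_j}_{T_j}, v^1_{T_{j+1}}\}$ (which exists in $G$). The key point is that $P_w$ uses \emph{exactly one} facility--client edge, namely $\{w, v^1_{T_1}\}$, so the only opening-cost contribution of $w$ to $c'(P_w)$ is $\tfrac12 f(w)$. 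To bound the routing part, apply the triangle inequality to each bridging edge, $c(v^{k_j}_{T_j}, v^1_{T_{j+1}}) \le c(v^{k_j}_{T_j}, w) + c(w, v^1_{T_{j+1}})$. Comparing term by term with $\sum_{j : w_{T_j} = w} c(T_j)$ — each $c(T_j)$ contains precisely the outbound distance $c(w, v^1_{T_j})$, the internal client distances, and the return distance $c(v^{k_j}_{T_j}, w)$ — everything matches up except one leftover return distance that is simply dropped, yielding $c'(P_w) \le \sum_{j : w_{T_j} = w} c(T_j) + \tfrac12 f(w)$.

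Summing over $w \in F'$ and adding the zero-cost root edges gives $c'(H) \le \sum_{T \in \mathcal{T}} c(T) + \tfrac12 \sum_{w \in F'} f(w) \le \OPT$. The subgraph $H$ spans $V \cup \{r\}$: every client has positive demand and is therefore visited by at least one tour, hence lies on some $P_w$; every facility and the root are joined by root edges; and a client served by several tours simply appears on several paths, which only helps connectivity. This completes the argument modulo the extraction of a spanning tree, which is free because $c' \ge 0$.

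I expect the main obstacle to be the opening-cost accounting for a facility served by many tours: a naive construction that attaches each tour to $w$ by its own facility--client edge would charge $w$ a contribution of $\tfrac{m}{2} f(w)$, which can exceed the budget $f(w)$. The remedy is exactly the chaining above — pay a single facility--client edge incident to $w$ and splice the remaining tours together through client--client bridging edges — together with the triangle-inequality bound that keeps each bridging edge within the routing cost already spent on the return/outbound legs of the optimal tours. The factor $\tfrac12$ in the definition of $c'(v,w)$ is calibrated precisely so that this one incident facility--client edge stays within the $f(w)$ budget (indeed with room to spare, since only $\tfrac12 f(w)$ is used).
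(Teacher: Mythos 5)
Your proof is correct and takes essentially the same route as the proof the paper relies on (the paper itself only cites Lemma~2 of Harks, K\"onig and Matuschke 2013): concatenate the tours of each open facility into a single walk, use the triangle inequality to shortcut the intermediate depot visits so that only one facility--client edge remains, drop the final return leg, charge $\tfrac12 f(w)$ exactly once per open facility, and connect the root and all facilities through the zero-cost edges $\{r,w\}$. Your observation that the naive per-tour attachment would overcharge a facility with $m$ tours by $\tfrac{m}{2}f(w)$ is precisely the difficulty the chaining construction is designed to avoid, so the accounting closes as in the cited argument.
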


\subsection{Capacitated Facility Location Lower Bound}
The second lower bound is based on facility location. For \LRVC{} the bound uses an instance of \UFLext{}. It is straightforward to see that by using a corresponding instance of Capacitated Facility Location instead, one obtains a lower bound for \LRVDC{}.

\begin{lemma} \label{lem:cfl_lb}
For a given \LRVDC{} instance, consider the following \CFL{} instance: clients, facilities, demands, opening costs, and facility capacities remain the same as in the \LRVDC{} instance, but the distances are set to $\tilde{c} := \nicefrac{2c}{\bar{u}}$.
Let $(\tilde{F}, \tilde{x})$ be an optimal solution to this \CFL{} instance.
Then $\sum_{w \in \tilde{F}} \big(f(w) + \sum_{v \in \clients} \tilde{c}(v, w) \tilde{x}(v, w)\big) \leq \OPT$.
\end{lemma}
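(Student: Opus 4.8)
The plan is to prove the bound by exhibiting an explicit \emph{feasible} solution to the \CFL{} instance whose cost is at most $\OPT$; since $(\tilde{F}, \tilde{x})$ is an optimal solution of that instance, its cost can then only be smaller. First I would fix an optimal \LRVDC{} solution $(F', \mathcal{T})$ of cost $\OPT$ and use it to build a candidate \CFL{} solution $(F', x')$: open exactly the facilities in $F'$, and for every client $v \in \clients$ and facility $w \in F'$ set $x'(v, w) := \sum_{T \in \mathcal{T} : w_T = w} x_T(v)$, i.e., the total demand of $v$ served by tours rooted at $w$.

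Next I would verify that $(F', x')$ is feasible for the \CFL{} instance. Summing $x'(v, w)$ over $w$ recovers $\sum_{T \in \mathcal{T}} x_T(v) = d(v)$ by the client-demand constraint~\ref{c:client-demand}, so all demands are met. Summing $x'(v, w)$ over $v$ yields $\sum_{T : w_T = w} \sum_{v \in \clients} x_T(v)$, which is at most $u(w)$ by the facility-capacity constraint~\ref{c:depot-cap}; hence no facility capacity is exceeded. Because opening costs and capacities are identical in both instances, this step is pure bookkeeping.

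The heart of the argument is bounding the connection cost of $(F', x')$ by the routing cost of $\mathcal{T}$. It suffices to prove, for each individual tour $T$ with root $w = w_T$, the inequality $\frac{2}{\bar{u}} \sum_{v \in \clients} c(v, w)\, x_T(v) \le c(T)$; summing over $T$ and recalling $\tilde{c} = \nicefrac{2c}{\bar{u}}$ then gives $\sum_{w \in F'} \sum_{v} \tilde{c}(v, w)\, x'(v, w) \le \sum_{T \in \mathcal{T}} c(T)$. To establish the per-tour inequality I would use the triangle inequality twice: for any client $v^i_T$ on the tour, the two arcs of the closed walk $T$ joining $w$ to $v^i_T$ (forward through $v^1_T, \dots, v^i_T$ and backward through $v^k_T, \dots, v^i_T$) each have length at least $c(v^i_T, w)$, so $2\,c(v^i_T, w) \le c(T)$; that is, every visited client lies within distance $\tfrac{1}{2} c(T)$ of the root. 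Combining this with the vehicle-capacity bound $\sum_v x_T(v) \le \bar{u}$ from constraint~\ref{c:vehicle-cap} yields $\sum_{v} c(v, w)\, x_T(v) \le \tfrac{1}{2} c(T) \sum_v x_T(v) \le \tfrac{\bar{u}}{2}\, c(T)$, which rearranges to the claimed inequality.

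Finally I would assemble the pieces: adding the identical opening costs to both sides shows that the total cost of $(F', x')$ is at most $\sum_{w \in F'} f(w) + \sum_{T \in \mathcal{T}} c(T) = \OPT$, and optimality of $(\tilde{F}, \tilde{x})$ then gives $\sum_{w \in \tilde{F}} \bigl(f(w) + \sum_v \tilde{c}(v, w) \tilde{x}(v, w)\bigr) \le \OPT$, as desired. The only genuinely nontrivial step is the per-tour ``factor $2$'' estimate; the calibration $\tilde{c} = \nicefrac{2c}{\bar{u}}$ is precisely what makes it tight, and I expect this to be the single point requiring care, with everything else reducing to rewriting sums and invoking the feasibility constraints.
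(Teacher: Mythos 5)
Your proposal is correct and matches the paper's proof essentially step for step: both construct the candidate \CFL{} solution $x'(v,w) = \sum_{T : w_T = w} x_T(v)$ from an optimal \LRVDC{} solution, verify feasibility from constraints~\ref{c:client-demand} and~\ref{c:depot-cap}, and bound the connection cost per tour via the triangle-inequality observation that each visited client lies within distance $\tfrac{1}{2}c(T)$ of the root, combined with the vehicle-capacity bound $\sum_v x_T(v) \leq \bar{u}$. If anything, your writeup states the triangle-inequality step more carefully (the two $w_T$--$v$ arcs have length \emph{at least} $c(v, w_T)$) than the paper's phrasing does.
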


\begin{proof}
  Let $(F', \mathcal{T})$ be an optimal solution to the \LRVDC{} instance.
  For $v \in \clients$ and $w \in F'$ define $x'(v, w) = \sum_{T \in \mathcal{T} : w_T = w} x_T(v)$.
  Note that $(F', x')$ is a feasible solution to the \CFL{} instance. Moreover,
  \begin{align*}
    \sum_{v \in \clients} \sum_{w \in F'} \tilde{c}(v, w) x'(v, w) & = \sum_{v \in \clients} \sum_{w \in F'} \sum_{T \in \mathcal{T} : w_T = w} \frac{2}{\bar{u}} c(v, w) x_T(v)\\
    & = \sum_{T \in \mathcal{T}} \sum_{v \in \clients}  2c(v, w_T) \frac{x_T(v)}{\bar{u}}\\
    & \leq \sum_{T \in \mathcal{T}} c(T)  \sum_{v \in \clients} \frac{x_T(v)}{\bar{u}} \leq \sum_{T \in \mathcal{T}} c(T),  \end{align*}
    where the penultimate inequality follows from the fact that any tour $T$ containing a client $v$ can be decomposed into a $w_T$-$v$-path and a $v$-$w_T$-path, each of which has length at most $c(v, w_T)$ by triangle inequality.
  Hence, the total cost of the \CFL{} solution $(F', x')$ is at most the cost of the \LRVDC{} solution $(F', \mathcal{T})$.
\end{proof}

\subsection{Approximation Gap for Lower Bounds in the Capacitated Setting}
In the setting where facilities are uncapacitated, the approximation result of \cite{HarksKonigMatuschke2013} implies that, for any instance, the maximum of the two lower bounds is within a constant factor of the value of an optimal solution. 
The following lemma reveals that this does no longer hold in the capacitated setting. 

\begin{lemma}\label{lem:lower_bounds_not_enough}
  For any $n \in \mathbb{N}$, there exists an instance of \LRVDC{} with $n$ clients and two facilities such that $\OPT \geq (n - 1) \max \{L', \tilde{L}\}$, where $L'$ and $\tilde{L}$ are the values of the lower bounds described in \cref{lem:mst_lb,lem:cfl_lb}, respectively.
\end{lemma}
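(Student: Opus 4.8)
The goal is to construct an instance where both lower bounds stay small while the true optimum is forced to be large, by a factor growing with $n$. The intuition is that both lower bounds essentially ``dissolve'' the demand: the spanning tree bound ignores capacities entirely and only charges for connecting clients cheaply, while the \CFL{} bound charges distance at the discounted rate $\nicefrac{2}{\bar u}$ per unit of demand and, crucially, \emph{allows a single facility to serve demand up to its capacity at that cheap rate}. The real solution, by contrast, must respect vehicle capacity \emph{and} facility capacity simultaneously, and if we engineer a tension between the two --- a facility that is cheap but far, versus a facility that is near but too small --- then every feasible routing is forced to pay an expensive per-client round trip that neither bound sees.

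\medskip

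Concretely, I would place the $n$ clients close together (say at a common location, or within a cluster of negligible diameter), each with unit demand, and set $\bar u$ large enough that vehicle capacity never binds. Put one facility $w_1$ \emph{at the client cluster} (distance $0$) but with capacity $u(w_1) = 1$, so it can absorb only a single client, and a second facility $w_2$ \emph{far away} at distance $D$ with capacity $u(w_2) = n$ and opening cost $0$. The \CFL{} lower bound of \cref{lem:cfl_lb} can route all $n$ units to the nearby $w_1$ in the LP/continuous sense --- wait, $w_1$ has capacity $1$, so the \CFL{} bound must send $n-1$ units to $w_2$; the key is to choose the scaling so $\tilde c = \nicefrac{2}{\bar u}\cdot D$ is tiny when $\bar u$ is huge, making $\tilde L = (n-1)\tilde c \cdot D$ negligible, while the spanning tree bound $L'$ only needs one edge of cost $D$ (or $\tfrac12 f$) to reach the cluster and then zero-cost edges among the co-located clients. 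Meanwhile, any \emph{actual} feasible solution can serve at most one client from $w_1$, so the remaining $n-1$ clients must each be visited by a tour rooted at $w_2$, costing roughly $2D$ per unit of served demand (since the cluster has negligible internal distance but each round trip pays to reach it), giving $\OPT \gtrsim (n-1)\cdot 2D$.

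\medskip

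The main obstacle is calibrating the parameters so that the ratio $\OPT / \max\{L', \tilde L\}$ is \emph{exactly} at least $n-1$, rather than merely ``large''. The spanning tree bound is the dangerous one: because $c'$ adds $\tfrac12 f(w)$ to client--facility edges and the clients are mutually cheap, $L'$ can connect the whole cluster using a \emph{single} expensive edge to a facility plus free intra-cluster edges, so $L' \approx D$ (plus a facility term), which is already a factor $\Theta(n)$ below $\OPT \approx 2(n-1)D$. I would make the clients co-located with $w_1$ so that the tree can enter the cluster at cost $0$ via $w_1$, forcing $L'$ to be governed only by the $\tfrac12 f(w_1)$ term, which I set to $0$ as well, pinning $L' = 0$ (or an arbitrarily small value). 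For the \CFL{} bound I must verify that even its \emph{optimal} assignment cannot avoid the discount: since $\tilde c(v,w_2) = \nicefrac{2D}{\bar u}$ and we are free to take $\bar u \to \infty$, the term $\tilde L = (n-1)\cdot \nicefrac{2D}{\bar u}$ vanishes independently of $n$ once $\bar u \geq 2(n-1)D$, say. The final step is to assemble these estimates: confirm $\OPT \geq 2(n-1)D$ by a direct lower bound on any feasible routing (each of the $n-1$ clients forced onto $w_2$ contributes a round trip of length at least $2D$ by the triangle inequality, and these cannot be amortized because intra-cluster travel is free but the $w_2$-to-cluster leg must be paid at least once per unit of capacity-1 slack --- here I would argue that consolidating clients into a single long tour still pays $2D$ for the entry/exit legs but then the bound degrades, so I instead force capacity or use $f(w_2)$ and per-tour structure to guarantee the per-client factor). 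I would therefore likely set $\bar u = 1$ as well, so that each tour serves exactly one client and every one of the $n-1$ far clients incurs its own $2D$ round trip, making $\OPT = 2(n-1)D$ clean; then reconcile this with keeping $\tilde L$ small by instead scaling $D$ and the capacities, which is the delicate balancing act the proof must nail down.
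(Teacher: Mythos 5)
Your construction idea is the right one and close in spirit to the paper's, but the parameter choice you finally commit to is fatally flawed, and the ``delicate balancing act'' you defer to the end is precisely the content of the lemma. Setting $\bar{u} = 1$ cannot work, not just for your instance but in principle: as the paper itself observes in its problem definition, \CFL{} with distances $\tilde{c} = \nicefrac{2c}{\bar{u}} = 2c$ is \emph{equivalent} to \LRVDC{} with $\bar{u}=1$ (every tour is a round trip serving one unit of demand, and $\tilde c$ then prices each unit at exactly its round-trip cost). So with $\bar u = 1$ the \CFL{} lower bound of \cref{lem:cfl_lb} is exact, the ratio $\OPT/\tilde L$ equals $1$ on every instance, and no rescaling of $D$ or of the capacities can separate the two quantities, since they scale identically. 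Your pivot to $\bar u=1$ was driven by the worry that a consolidated tour from $w_2$ makes $\OPT$ only $2D$ rather than $2(n-1)D$; but that worry is misplaced. You do not need $\OPT$ to grow linearly in $n$ --- you need $\tilde L$ to be a factor $n-1$ \emph{below} $\OPT$, and the true source of the gap is tour underutilization: a real tour pays its full round trip no matter how little demand it carries, while the \CFL{} bound charges only $\nicefrac{2}{\bar u}$ per unit.

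Once this is recognized, your \emph{original} large-$\bar u$ instance works after a one-line calibration you abandoned too hastily: with $u(w_1)=1$, $u(w_2)=n$, $f\equiv 0$, unit demands, and all clients co-located with $w_1$ at distance $D$ from $w_2$, one has $L'=0$, $\tilde L = \nicefrac{2(n-1)D}{\bar u}$, and $\OPT = 2D$ (a single tour from $w_2$ collects all $n-1$ overflow clients), so choosing $\bar u \geq (n-1)^2$ yields $\OPT/\tilde L = \nicefrac{\bar u}{(n-1)} \geq n-1$ as required. The paper's own calibration exploits the same phenomenon with a single overflow unit instead of $n-1$ of them: it sets $u(w_1)=u(w_2)=\bar u = n-1$, distance $1$ between $w_1$ and $w_2$, all opening costs $0$, so that $L' = 0$, the \CFL{} bound sends only one unit to $w_2$ at cost $\tilde L = \nicefrac{2}{(n-1)}$, while any feasible solution must run at least one tour to $w_2$ of cost at least $2$ by the triangle inequality, giving $\OPT \geq 2 = (n-1)\tilde L$. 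Either calibration completes your argument; as written, your proposal ends without a working parameter setting, and the one it tentatively endorses provably fails.
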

\begin{proof}
Consider a \LRVDC{} instance with $n = |\clients|$ clients with unit demands $d \equiv 1$. There are two possible facility locations: $\depots = \{w_1, w_2\}$ with $f(w_1) = f(w_2) = 0$ and $u(w_1) = u(w_2) = \bar{u} = n-1$. All clients are located at the same position as $w_1$, which has a distance of $1$ to $w_2$. That is, $d(v, w_1) = 0$ and $d(v, w_2) = d(w_1, w_2) = 1$ for all $v \in \clients$.

Observe that for this instance the value of the MST lower bound as described in \cref{lem:mst_lb} is $0$ because facility opening cost are $0$ and every client is located at distance $0$ to a facility.
Observe further that the value of the \CFL{} lower bound as described in \cref{lem:cfl_lb} is $\nicefrac{2}{\bar{u}} = \nicefrac{2}{(n-1)}$, as all but one client can be served by facility $w_1$ at cost $0$.
Note, however, that any feasible solution to the \LRVDC{} instance needs to contain a tour connecting at least one client to facility $w_2$ and that the cost of such a tour is at least $2$. Thus the cost of an optimal solution for the constructed instance is at least $n-1$ times the larger of the two lower bounds.
\end{proof}

In the next section, however, we will show that a solution within a constant factor of the two lower bounds can still be obtained when slightly relaxing the facility capacities. The following corollary is an implication of the analysis presented in \cref{sec:algorithm}.

\begin{corollary}
  For any $\varepsilon \in (0, 1]$ and any instance of \LRVDC{}, there exists a solution in which constraints \eqref{c:open-fac}-\eqref{c:vehicle-cap} are fulfilled, the load at any facility $w \in \depots$ is no more than $u(w) + \varepsilon \bar{u}$, and the total cost is no more than $4L' + \frac{2}{\varepsilon} \tilde{L}$, where $L'$ and $\tilde{L}$ are the values of the lower bounds described in \cref{lem:mst_lb,lem:cfl_lb}, respectively.
\end{corollary}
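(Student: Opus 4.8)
The plan is to obtain the corollary as a byproduct of the constructive algorithm of \cref{sec:algorithm}, observing that its analysis really bounds the cost of the produced solution against the two lower bounds directly, rather than against $\OPT$. The routing is paid for by $4L'$ and the facility assignment by $\frac{2}{\varepsilon}\tilde{L}$; the headline bound $(4+\nicefrac{2\alpha}{\varepsilon})\OPT$ then follows by combining $L'\le\OPT$ from \cref{lem:mst_lb} with the fact that the \CFL{} subroutine returns a solution of cost at most $\alpha\tilde{L}\le\alpha\OPT$ via \cref{lem:cfl_lb}. To get the corollary one simply feeds an \emph{optimal} \CFL{} solution into the same construction, so that effectively $\alpha=1$ and the facility-location term becomes $\frac{2}{\varepsilon}\tilde{L}$. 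The existence statement is thus the ``$\alpha=1$'' specialization of the algorithmic guarantee, read off against the bounds instead of against $\OPT$.

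Concretely, I would first compute a minimum spanning tree $T'$ as in \cref{lem:mst_lb} and apply the tree-decomposition of \citet{RaviSinha2006} to carve $T'$ into client clusters, each of total demand at most $\varepsilon\bar{u}$ (and of demand $\Theta(\varepsilon\bar{u})$ whenever possible, which keeps the number of clusters under control); a client $v$ with $d(v)>\varepsilon\bar{u}$ has its demand split across several clusters, which is exactly why the single-tour property requires $d(v)\le\varepsilon\bar{u}$. Since each cluster has demand at most $\varepsilon\bar{u}\le\bar{u}$, it can be served by a single feasible tour obtained by doubling the tree edges of the cluster (an Euler-tour and shortcutting argument) and appending a round trip to the serving facility; the doubled tree edges, charged against the client--client and client--facility weights $c'$ of $T'$, which already absorb $\tfrac12 f(w)$, account for the $4L'$ term once the full opening cost $f(w)$ is paid.

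Next I would set up an assignment linear program with a variable for the fraction of each cluster served from each facility, minimizing the round-trip connection cost subject to every cluster being fully served and every facility $w$ receiving load at most $u(w)$. The optimal \CFL{} solution of \cref{lem:cfl_lb}, reinterpreted as a fractional cluster-to-facility assignment, is feasible for this LP, and it is cheap: a cluster of demand $\approx\varepsilon\bar{u}$ incurs a round trip of cost $2c(\cdot,w)$, whereas the \CFL{} objective charges only $\tilde{c}(\cdot,w)\cdot\varepsilon\bar{u}=2\varepsilon\,c(\cdot,w)$ in $\tilde{c}$-units for the same demand, so the connection cost inflates by a factor $\Theta(\nicefrac{1}{\varepsilon})$, yielding the $\frac{2}{\varepsilon}\tilde{L}$ term after accounting for triangle-inequality slack between cluster representatives and the clients' \CFL{} assignments.

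The hard part is the final rounding of this assignment LP: I must convert the fractional assignment into an integral one while simultaneously not increasing the connection cost and keeping every facility's excess load below $\varepsilon\bar{u}$. The natural tool is to pass to a basic feasible solution of this transportation-type LP, whose fractional support forms a forest; rounding along the forest sends each cluster to a single facility and overloads any facility by at most the demand of one cluster, i.e.\ at most $\varepsilon\bar{u}$, which is precisely the admitted violation, while the resulting tours automatically satisfy constraints \eqref{c:open-fac}--\eqref{c:vehicle-cap}. Reconciling this capacity-violation guarantee with the cost guarantee --- ensuring the rounded assignment costs no more than the fractional optimum --- is the crux of the argument and the step where the polyhedral structure of the assignment LP must be exploited with care.
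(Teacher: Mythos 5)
Your high-level route is the paper's own: the corollary is indeed obtained by running the algorithm of \cref{sec:algorithm} with an \emph{optimal} \CFL{} solution in Step~2 (so effectively $\alpha = 1$) and reading the analysis against $L'$ and $\tilde{L}$ directly rather than against $\OPT$. Your clustering, double-tree tour construction, and rounding steps all match the paper; in particular, your final paragraph correctly identifies the mechanism of \cref{lem:flow-rounding} (extreme-point solution with forest support, augmentation along leaf-to-leaf paths in the cheaper direction, so cost never increases and each facility absorbs at most one extra cluster, i.e., at most $\varepsilon\bar{u}$ of excess load), so the step you call ``the crux'' is in fact resolved exactly as you anticipate.

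The genuine gap is in your middle step, where you assert that the optimal \CFL{} solution, reinterpreted as a fractional cluster-to-facility assignment, is feasible for the assignment LP \emph{and} cheap, justifying the $\Theta(\nicefrac{1}{\varepsilon})$ inflation by clusters having demand $\approx \varepsilon\bar{u}$. The clustering cannot guarantee this: while the clusters carved out in the main loop have demand at least $\nicefrac{\varepsilon\bar{u}}{2}$, the residual clusters created at the end (one per facility in $F_1$) may have arbitrarily small demand, and for such a cluster $S$ the per-unit LP cost $\nicefrac{c(S,w)}{d(S)}$ is unbounded relative to $\tilde{c} = \nicefrac{2c}{\bar{u}}$. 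Concretely, a leftover cluster consisting of a single unit-demand client that the \CFL{} optimum serves from a facility at distance $D$ contributes only $\nicefrac{2D}{\bar{u}}$ to $\tilde{L}$, but your reinterpreted assignment pays roughly $D$ for it in the LP --- a ratio of $\nicefrac{\bar{u}}{2}$, not $\nicefrac{2}{\varepsilon}$. The paper's \cref{lem:assignment-lp} handles exactly this case: every low-demand cluster $S \in \mathcal{S}'$ contains a unique facility $w_S$ (the last assertion of \cref{lem:clustering}), is assigned there at connection cost $0$, and the capacity this consumes at $w_S$ is compensated by a max-flow/min-cut rerouting argument (\cref{claim:flow}) over a network whose edges are the support of $\tilde{x}$ together with the trees $T_S$ for $S \in \mathcal{S}'$, with capacities $d(S)$ on the tree edges. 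This also repairs your cost bookkeeping: the flow's cost picks up an additional $\sum_{S \in \mathcal{S}'} c(T_S) \leq c(T')$, and it is this term, doubled in the tour construction, that raises the tree coefficient from the $2L'$ your sketch would yield to the $4L'$ in the statement. Your passing mention of ``triangle-inequality slack'' does not cover this; without the unique-facility-plus-rerouting idea (or a substitute), the LP feasibility/cost claim fails and the bound $4L' + \frac{2}{\varepsilon}\tilde{L}$ does not follow.
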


\section{Approximation Algorithm}
\label{sec:algorithm}

In the following, we present a bifactor approximation algorithm for \LRVDC{}. The algorithm and the accompanying analysis prove the following result:

\begin{theorem}\label{thm:approximation}
There is an algorithm that, given $\varepsilon \in (0, 1]$, computes in polynomial time a solution to a given instance of {\LRVDC} such that conditions \eqref{c:open-fac}-\eqref{c:vehicle-cap} are fulfilled, the load at any facility $w \in \depots$ is no more than $u(w) + \varepsilon \bar{u}$, and the total cost is no more than $(4 + \frac{2\alpha}{\varepsilon})\OPT$, where $\alpha \geq 1$ is the approximation factor of an algorithm for {\CFL}.
\end{theorem}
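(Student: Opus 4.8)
The plan is to combine the two lower bounds from \cref{lem:mst_lb,lem:cfl_lb} in a three-stage algorithm: cluster the clients using the spanning tree, decide which facilities to open and how to distribute the clusters among them using the \CFL{} solution, and finally turn the clusters assigned to each facility into tours. First I would compute the minimum spanning tree $T'$ of \cref{lem:mst_lb}, whose cost satisfies $c'(T') = L' \leq \OPT$, and an $\alpha$-approximate solution $(\hat{F}, \hat{x})$ to the \CFL{} instance of \cref{lem:cfl_lb}, whose cost is at most $\alpha \tilde{L} \leq \alpha \OPT$; the set $\hat{F}$ will be the facilities I open. Using a Ravi--Sinha-style traversal of $T'$, I would then partition the clients into groups, each carrying total demand at most $\varepsilon \bar{u}$ (splitting individual client demands across groups when a single demand exceeds this threshold), in such a way that a standard tour-partitioning argument charges the intra-group routing, together with the tree-based portion of the facility connections, to at most $4L'$ in total. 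Because each group has demand at most $\varepsilon \bar{u}$, assigning an entire group integrally to a single facility overshoots that facility's load by at most $\varepsilon \bar{u}$, which is exactly the slack the theorem allows; keeping groups integral also yields the single-sourcing and single-tour properties claimed in the introduction whenever all individual demands are at most $\varepsilon \bar{u}$.

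The heart of the argument is the assignment of groups to open facilities. I would set up a transportation-type linear program with a variable $y_{g,w}$ for each group $g$ and facility $w \in \hat{F}$, where $y_{g,w}$ is the fraction of group $g$ served from $w$, the objective is the round-trip connection cost $\approx 2 c(r_g, w)$ of a representative $r_g$ of $g$, and facility-capacity constraints of the form $\sum_g (\text{demand of } g)\, y_{g,w} \leq u(w)$ are imposed. To certify feasibility and bound the optimum of this LP, I would aggregate the \CFL{} solution $\hat{x}$ to the group level, setting $y_{g,w}$ proportional to $\sum_{v \in g} \hat{x}(v,w)$; aggregation preserves the capacity constraints, and the triangle inequality $c(r_g, w) \leq c(r_g, v) + c(v, w)$ together with the tree bound on the $c(r_g, v)$ terms shows that this fractional assignment has cost controlled by $L'$ and the \CFL{} connection cost. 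Crucially, since each group carries demand only $\approx \varepsilon \bar{u}$ while one unit of \CFL{} cost (under $\tilde{c} = \nicefrac{2c}{\bar{u}}$) corresponds to a full vehicle load $\bar{u}$, the per-group round-trip cost is inflated by a factor of $\tfrac{1}{\varepsilon}$ relative to the \CFL{} cost, producing, after accounting for the representative distances via the triangle inequality, the $\tfrac{2\alpha}{\varepsilon}\tilde{L}$ term.

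I would then round this LP. Because the constraint matrix is that of a transportation problem, a vertex solution (or a suitable iterative-/dependent-rounding argument) is integral in all but a controlled set of entries, and I would argue that the rounding assigns each group integrally to one facility while increasing any facility's load by at most the maximum group demand, i.e.\ by at most $\varepsilon \bar{u}$; this is precisely what guarantees the relaxed capacity bound $u(w) + \varepsilon \bar{u}$ and hence condition on the facility load. Finally, at each open facility I would pack the groups assigned to it into tours of demand at most $\bar{u}$, satisfying \eqref{c:vehicle-cap}, and route each tour, connecting the groups to the facility at the cost already budgeted in the LP. Conditions \eqref{c:open-fac}--\eqref{c:client-demand} hold by construction. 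Collecting the three cost contributions yields a total of at most $4L' + \tfrac{2\alpha}{\varepsilon}\tilde{L} \leq (4 + \tfrac{2\alpha}{\varepsilon})\OPT$, matching the theorem.

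I expect the main obstacle to be the rounding step. One must simultaneously keep each group integral (to support single-sourcing), bound the capacity violation by $\varepsilon \bar{u}$ rather than by a full $\bar{u}$, and incur no increase in connection cost relative to the fractional optimum; this is exactly where exploiting the network-flow structure of the assignment polytope is essential. A secondary delicate point is relating the client-level \CFL{} assignment to the group-level assignment without losing more than the intended factors, since clients consolidated into one group may be served by several distinct facilities in $\hat{x}$, so the triangle-inequality bookkeeping that transfers the \CFL{} connection cost to the chosen representatives must be carried out carefully.
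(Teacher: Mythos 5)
Your proposal follows the paper's architecture almost exactly (MST-based clustering, a cluster-to-facility transportation LP seeded by the \CFL{} lower-bound solution, forest-based rounding, double-tree tours), but it has a genuine gap in the assignment step. Your $\frac{2\alpha}{\varepsilon}$ cost bound requires every group to carry demand on the order of $\varepsilon\bar{u}$ (you write ``each group carries demand only $\approx\varepsilon\bar{u}$'' and inflate the per-unit \CFL{} cost by $\frac{1}{\varepsilon}$), yet the tree-partitioning step cannot enforce a lower bound on all group demands: after all heavy subtrees are extracted (as in \cref{alg:clustering}), residual demand remains hanging below facilities, and these leftover groups can have arbitrarily small demand. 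For such a group $g$, the per-unit connection cost $2c(r_g,w)/d(g)$ is \emph{not} bounded by $\frac{2}{\varepsilon}\tilde{c}(v,w)$, so your aggregated certificate $y_{g,w}\propto\sum_{v\in g}\hat{x}(v,w)$, while capacity-feasible, has cost that cannot be charged to $\frac{2\alpha}{\varepsilon}\tilde{L}$ --- a single remote leftover client of demand $\delta\to 0$ makes the ratio between its round-trip cost and its \CFL{} cost $\tilde{c}\,\delta$ unbounded. Merging leftovers into other groups does not help, since it destroys the tree-connectivity on which your $4L'$ charging of intra-group routing relies.

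The paper closes exactly this hole with machinery absent from your plan: it opens, in addition to the \CFL{} facilities $F_2$, the facilities $F_1$ at which the leftover clusters are anchored (their opening costs are paid by the $\frac{1}{2}f(w)$ terms that the weights $c'$ of \cref{lem:mst_lb} place on tree edges, absorbed by the factor $4$), and it pre-assigns each low-demand cluster $S\in\mathcal{S}'$ to its anchor $w_S$ at connection cost $c(S,w_S)=0$, charging its routing to $2c(T_S)$. This pre-assignment consumes capacity $d(S_w)$ at $w$ that the \CFL{} assignment $\tilde{x}$ may have allocated to other clients, so direct aggregation no longer certifies feasibility of the cluster-level LP; the paper instead proves existence of a cheap feasible LP solution via a max-flow/min-cut argument (\cref{claim:flow}) on a network that combines the support of $\tilde{x}$ with the tree edges of the small clusters (given capacity $d(S)$) and reduced facility supplies $s(w)=u(w)-d(S_w)$. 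You flag the client-to-group transfer as ``delicate,'' but your proposed aggregation does not resolve it for the small groups, and without an ingredient of this kind your Step~2 cost bound does not go through. The remaining components of your proposal are sound and match the paper: with all groups of demand at least $\varepsilon\bar{u}/2$ your triangle-inequality bookkeeping indeed yields $2c(T')$ inside the LP plus $2c(T')$ from doubling, and your rounding claim (integral groups, violation at most one group, hence at most $\varepsilon\bar{u}$, per facility, no cost increase) is exactly \cref{lem:flow-rounding}.
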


Throughout this section let $\varepsilon \in (0, 1]$ be the parameter given in the input of the algorithm and define $\bar{\bar{u}} := \varepsilon \bar{u}$.

\subsubsection*{Overview}
The algorithm consists of three steps.
In the first step, a minimum spanning tree for the instance described in \cref{lem:mst_lb} is partitioned into clusters such that each cluster contains clients with a total demand of at most $\bar{\bar{u}}$ and every cluster with demand less than $\nicefrac{\bar{\bar{u}}}{2}$ contains a facility.
The clustering technique is based on a procedure for relieving overloaded subtrees by \cite{AlpertKahngMandoiuZelikovsky2003} and \cite{RaviSinha2006}.
In the second step, the clusters are assigned to open facilities via a rounding procedure for an assignment LP. Using the fact that most clusters have large aggregated demand, it is shown that solutions to the {\CFL} instance described in \cref{lem:cfl_lb} induce feasible solutions of bounded cost for this LP.
The rounding procedure might allocate one additional cluster per facility, thus resulting in an additional demand of at most $\bar{\bar{u}}$ at any open facility.
Finally, each cluster is converted into a tour by adding an edge to its assigned facility an using the classic doubling-and-shortcutting technique for TSP.

\subsection{Step 1: Clustering}
\label{sec:relieving}

The first step of the algorithm consists of a clustering procedure that partitions the set of clients into appropriate clusters and associated trees connecting the nodes within each cluster.

\begin{figure}[t]
	\centering
	\begin{tikzpicture}[scale=0.9]
    \path (0, -0.7) node {$T'$};
		\path node[facNode] (r) {}
			+ (-1.5, 1) node[normalNode] {} edge[normalEdge] (r)
			++ (1, 1.25) node[normalNode] (v0) {} edge[normalEdge] (r)
			+ (-1.5, 0.5) node[normalNode] (v1) {} edge[normalEdge] (v0) 
			+ (0, 1) node[normalNode] (v2) {} edge[normalEdge] (v0) 
			+ (2, 1) node[normalNode] (v4) {} edge[normalEdge] (v0);
		\path (v1) 
			+(-0.5, 1) node[normalNode] {} edge[normalEdge] (v1);
		\path (v2) 
			+(0, 1) node[normalNode] {} edge[normalEdge] (v2);
		\path (v4)
			+(0.5, 0.5) node[normalNode] {} edge[normalEdge] (v4);
		
		\node[draw, single arrow, minimum height=8mm, minimum width=0.75mm, single arrow head extend=.5mm] at (4.5,1.5) {};
		
		\begin{scope}[xshift=8cm]
			\path (0, -0.7) node {$T'$};
    		\path node[facNode] (r) {}
    			+ (-1.5, 1) node[labeledNode] (ve) {} edge[normalEdge] (r)
    			++ (1, 1.25) node[labeledNode] (v0) {} edge[normalEdge] (r)
    			+ (-1.5, 0.5) node[labeledNode] (v1) {} edge[normalEdge] (v0) 
    			+ (0, 1) node[labeledNode] (v2) {} edge[normalEdge] (v0) 
    			+ (1, 1.9) node[normalNode] (v3) {} edge[dashedEdge] (v0)   
    			+ (2, 1) node[labeledNode] (v4) {} edge[normalEdge] (v0);
    		\path (v1) 
    			+(-0.35, 1) node[normalNode] {} edge[normalEdge] (v1)
    			+(0.35, 1) node[normalNode] {} edge[dashedEdge] (v1);
    		\path (v2) 
    			+(-0.5, 0.5) node[normalNode] {} edge[dashedEdge] (v2)
    			+(0, 1) node[normalNode] {} edge[normalEdge] (v2);
    		\path (v4) 
    			+(-0.5, 0.5) node[normalNode] {} edge[dashedEdge] (v4)
    			+(0.5, 0.5) node[normalNode] {} edge[normalEdge] (v4);
    		\path (ve)
    		    +(-0.35, 0.75) node[normalNode] {} edge[dashedEdge] (ve)
    			+(0.35, 0.75) node[normalNode] {} edge[dashedEdge] (ve);
		\end{scope}
	\end{tikzpicture}
	\caption{Preprocessing of the tree. If a client $v$ does not occur as a leaf, or if $d(v) > \bar{\bar{u}}$, it is replaced by a dummy node whose demand is equal to $0$. For each such client, $\ell := \lceil d(v) / \bar{\bar{u}} \rceil$ nodes are added at distance $0$ from the original, each with a demand of $d(v)/\ell$. In the figure, hollow circles represent dummy nodes, filled circles represent clients with positive demand, and the square represents a facility. In the modified tree, each of these new nodes is connected to the dummy node corresponding to $v$ (the corresponding edges are depicted as dashed lines).}
	\label{fig:preproc}
\end{figure}
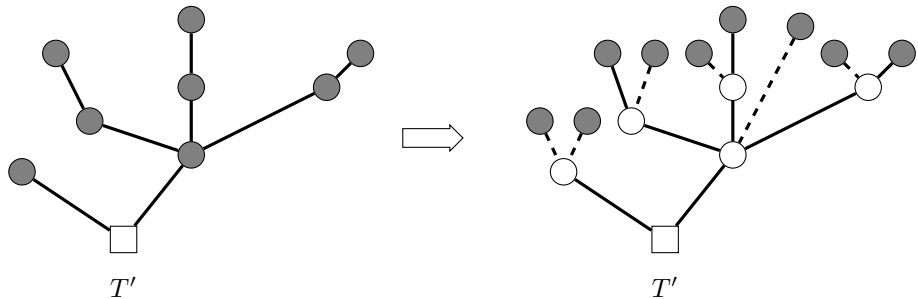

\subsubsection*{Preprocessing of the tree}
The clustering procedure starts with a minimum spanning tree $T'$ for the instance $(G, c')$ described in \cref{lem:mst_lb}.
Note that we can assume $T'$ to contain the edges $\{r, w\}$ for all $w \in \depots$ without loss of generality (as those edges have cost $0$).
We further modify the tree such that such that clients only occur at leafs and all clients have demand at most $\bar{\bar{u}}$.
This can be achieved by splitting each client $v$ in $\lceil d(v) / \bar{\bar{u}} \rceil$ nodes, distributing the demand uniformly among them, and by introducing an additional \emph{dummy node} at the location of each client that takes the role of the original, possibly internal, node of the client in the tree. The nodes resulting from splitting the client, which have positive demand, are attached to this dummy node as leafs. See \cref{fig:preproc} for an illustration. Dummy nodes will be removed when constructing the tours at the end of the algorithm, so they do not occur in the final solution.
Note that neither of these operations increases the cost of the tree and that clients with demand at most $\bar{\bar{u}}$ remain represented by a single client.

\subsubsection*{Notation}
For $v, w \in V(T')$ we say that $w$ is a \emph{descendant} of $v$ in $T'$ if $v$ is on the unique $r$-$w$-path in $T'$.
We say that $w$ is a \emph{child} of $v$ in $T'$, if $v$ directly precedes $w$ on the unique $r$-$w$-path in $T'$.
We use the notation $T'[v]$ to denote the subtree of $T'$ containing $v$ and all its descendants. We denote the set of children of $v$ in $T'$ by $\children(v)$. 
We further let $d_{T'}(v) := \sum_{v' \in V(T'[v]) \cap \clients} d(v')$ denote the total demand in the subtree of $v$.

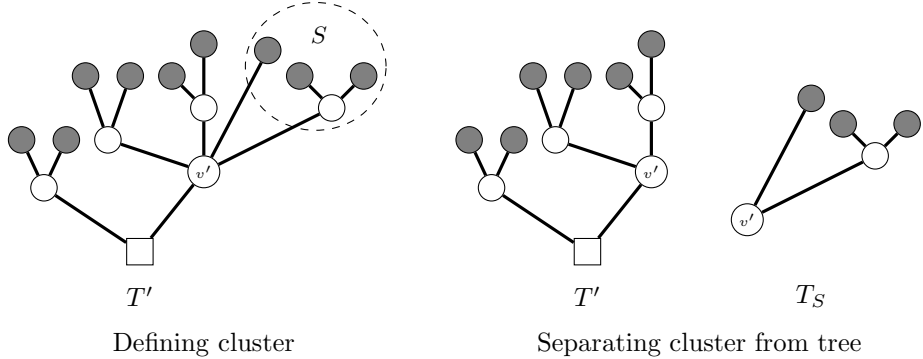
\begin{figure}[t]
	\centering
	\begin{tikzpicture}[scale=0.85]
    \path (0, -0.7) node {$T'$}
      +(1, -0.75) node {Defining cluster};
      
        \path node[facNode] (r) {}
			+ (-1.5, 1) node[labeledNode] (ve) {} edge[normalEdge] (r)
			++ (1, 1.25) node[labeledNode] (v0) {\tiny $v'$} edge[normalEdge] (r)
			+ (-1.5, 0.5) node[labeledNode] (v1) {} edge[normalEdge] (v0) 
			+ (0, 1) node[labeledNode] (v2) {} edge[normalEdge] (v0) 
			+ (1, 1.9) node[normalNode] (v3) {} edge[normalEdge] (v0)   
			+ (2, 1) node[labeledNode] (v4) {} edge[normalEdge] (v0);
		\path (v1) 
			+(-0.35, 1) node[normalNode] {} edge[normalEdge] (v1)
			+(0.35, 1) node[normalNode] {} edge[normalEdge] (v1);
		\path (v2) 
			+(-0.5, 0.5) node[normalNode] {} edge[normalEdge] (v2)
			+(0, 1) node[normalNode] {} edge[normalEdge] (v2);
		\path (v4) 
			+(-0.5, 0.5) node[normalNode] {} edge[normalEdge] (v4)
			+(0.5, 0.5) node[normalNode] {} edge[normalEdge] (v4);
		\path (ve)
		    +(-0.35, 0.75) node[normalNode] {} edge[normalEdge] (ve)
			+(0.35, 0.75) node[normalNode] {} edge[normalEdge] (ve);

		\draw[dashed, thin] (v3) +(0.75, -0.25) ellipse (1.1cm and 1cm);
		\path (v3) ++(0.8, 0.25) node {\small $S$};

		\begin{scope}[xshift=7cm]
			\path (0, -0.7) node {$T'$}
				+(1.75, -0.75) node {Separating cluster from tree}
				++(3.5, 0) node {$T_S$};
			\path node[facNode] (r) {}
        		+ (-1.5, 1) node[labeledNode] (ve) {} edge[normalEdge] (r)
    			++ (1, 1.25) node[labeledNode] (v0) {\tiny $v'$} edge[normalEdge] (r)
    			+ (-1.5, 0.5) node[labeledNode] (v1) {} edge[normalEdge] (v0) 
    			+ (1.5, -0.75) node[labeledNode] (v0a) {\tiny $v'$}
    			+ (0, 1) node[labeledNode] (v2) {} edge[normalEdge] (v0);
    		\path (v0a)
    			+ (1, 1.9) node[normalNode] (v3) {} edge[normalEdge] (v0a)   
    			+ (2, 1) node[labeledNode] (v4) {} edge[normalEdge] (v0a);
    		\path (v1) 
    			+(-0.35, 1) node[normalNode] {} edge[normalEdge] (v1)
			    +(0.35, 1) node[normalNode] {} edge[normalEdge] (v1);
    		\path (v2) 
    			+(-0.5, 0.5) node[normalNode] {} edge[normalEdge] (v2)
    			+(0, 1) node[normalNode] {} edge[normalEdge] (v2);
    		\path (v4) 
    			+(-0.5, 0.5) node[normalNode] {} edge[normalEdge] (v4)
    			+(0.5, 0.5) node[normalNode] {} edge[normalEdge] (v4);
    		\path (ve)
    		    +(-0.35, 0.75) node[normalNode] {} edge[normalEdge] (ve)
    			+(0.35, 0.75) node[normalNode] {} edge[normalEdge] (ve);
		\end{scope}
	
	\end{tikzpicture}
	\caption{A step of the clustering procedure. 
	As in \cref{fig:preproc}, hollow circles represent dummy nodes (with no demand) and filled circles represent clients (with positive demand).
	The procedure finds a node $v'$ with $d_{T'}(v') > \bar{\bar{u}}$ but $d_{T'}(v) \leq \bar{\bar{u}}$ for all $v \in \children(v')$. It greedily constructs a set $S$ such that $d(S) \geq \nicefrac{\bar{\bar{u}}}{2}$. The set $S$ is added to $\mathcal{S}$, and the corresponding tree $T_S$ is removed from $T'$.}
	\label{fig:relieving}
\end{figure}

\subsubsection*{Clustering}
The clustering procedure creates a family $\mathcal{S}$ (initially empty) of client sets, together with corresponding trees $T_S$ for each $S \in \mathcal{S}$. 
To this end, the procedure repeatedly identifies a node $v' \in V(T') \setminus \{r\}$ such that $d_{T'}(v') > \bar{\bar{u}}$ but $d_{T'}(v) \leq \bar{\bar{u}}$ for all $v \in \children(v')$.
Then a subset $L \subseteq \children(v')$ of the children of $v'$ is greedily selected so that $\nicefrac{\bar{\bar{u}}}{2} \leq \sum_{v \in L} d_{T'}(v) \leq \bar{\bar{u}}$ (note that this is possible because we assume $d(v) \leq \bar{\bar{u}}$ for all $v \in \clients$).
The client set $S := \bigcup_{v \in L} V(T'[v]) \cap \clients$ is added as a new element of $\mathcal{S}$, associated with the corresponding tree $T_S := \bigcup_{v \in L} T'[v] \cup \{v, v'\}$ consisting of the subtrees induced by the nodes in $L$ and the edges connecting these subtrees to $v'$.
The edges of $T_S$ and its nodes except for $v'$ are removed from $T'$ (see \cref{fig:relieving}).
Once $d_{T'}(v') \leq \bar{\bar{u}}$ for all nodes $v' \in V(T')$ in the remaining tree, the procedure identifies a set of facilities $F_1$ containing all facilities $w$ for which $V(T'[w]) \cap \clients \neq \emptyset$.
Each such set is added as an additional cluster to $\mathcal{S}$, together with the corresponding tree $T'[w]$.
The algorithm returns the clustering $\mathcal{S}$, the corresponding trees $T_S$ for $S \in \mathcal{S}$ and the set of facilities $F_1$.
A pseudo-code listing of the clustering procedure is given in \cref{alg:clustering} and an illustration is given in \cref{fig:relieving}.

\begin{algorithm}[t]
\DontPrintSemicolon
\caption{Clustering} \label{alg:clustering}
Let $T'$ be a minimum spanning tree in the graph $G'$ with weights $c'$.\;
Initialize $\mathcal{S} := \emptyset$.\;
\While{$\textup{there is } v \in V(T') \setminus \{r\} \textup{ with } d_{T'}(v) > \bar{\bar{u}}$}{
Let $v' \in V(T') \setminus \{r\}$ such that $d_{T'}(v') > \bar{\bar{u}}$ but $d_{T'}(v) \leq \bar{\bar{u}}$ for all $v \in \children(v')$.\;
Let $L \subseteq \children(v')$ be such that $\nicefrac{\bar{\bar{u}}}{2} \leq \sum_{v \in L} d_{T'}(v) \leq \bar{\bar{u}}$.\;
Add the set $S := \bigcup_{v \in L} V(T'[v]) \cap \clients$ to $\mathcal{S}$ and let $T_S := \bigcup_{v \in L} T'[v] \cup \{v, v'\}$.\;
Remove all edges of $T_S$ and all nodes of $V(T_S) \setminus \{v'\}$ from $T'$.\;
}
Let $F_1 := \{w \in \depots : V(T'[w]) \cap \clients \neq \emptyset\}$.\;
\For{$w \in F_1$}{
Add the set $S := V(T'[w]) \cap \clients$ to $\mathcal{S}$ and let $T_S := T[w]$.
}
\Return{$(\mathcal{S}, \{T_S : S \in \mathcal{S}\}, F_1)$}
\end{algorithm}

Note that while it is possible for the node sets $V(T_S)$ and $V(T_{S'})$ for $S, S' \in \mathcal{S}$ with $S \neq S'$ to overlap at a dummy node or a facility, the client sets $S$ and $S'$ are disjoint (recall that clients only occur at leafs of the tree $T'$) and the same is true for the edge sets of the trees $T_S$ and $T_{S'}$.
Hence $\mathcal{S}$ indeed comprises a partition of the clients and the trees $T_S$ partition the original tree $T'$ (with the exception of the edges $\{r, w\}$ for $w \in \depots$ which do not occur in any tree).
These observations and additional properties of the clustering following immediately from its construction in the algorithm are summarized below.

\begin{lemma}\label{lem:clustering}
  \cref{alg:clustering} computes in polynomial time a partition $\mathcal{S}$ of the clients together with a tree $T_S$ for each $S \in \mathcal{S}$ such that:
  \begin{itemize}
    \item $T_S \subseteq T'$ and $T_S \cap T_{S'} = \emptyset$ for all $S, S' \in \mathcal{S}$ with $S \neq S'$,
    \item $S \subseteq V(T_S)$ for all $S \in \mathcal{S}$,
    \item $\sum_{v \in S} d(v) \leq \bar{\bar{u}}$ for all $S \in \mathcal{S}$.
  \end{itemize}
  Moreover, defining $\mathcal{S}' := \{S \in \mathcal{S} : \sum_{v \in S \cap \clients} d(v) < \nicefrac{\bar{\bar{u}}}{2}\}$, there is a unique facility $w_S \in V(T_S)$ for every $S \in \mathcal{S}'$, and $w_{S} \neq w_{S'}$ for $S \neq S'$.
\end{lemma}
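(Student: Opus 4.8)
The plan is to verify each stated property directly from the construction in Algorithm~\ref{alg:clustering}, since \cref{lem:clustering} is essentially a bookkeeping summary of what the clustering procedure produces. I would organize the proof around four claims: that the trees $T_S$ are edge-disjoint subtrees of $T'$, that each cluster $S$ is contained in the vertex set of its tree, that each cluster has demand at most $\bar{\bar{u}}$, and finally the statement about the special facility $w_S$ for small clusters in $\mathcal{S}'$. Polynomial running time I would treat last and briefly: each iteration of the while-loop removes at least one client-bearing subtree from $T'$ (strictly decreasing the number of remaining nodes with positive demand), so there are at most $|V(T')|$ iterations, and finding a node $v'$ with the required overload property is a simple bottom-up traversal.

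\textbf{Edge-disjointness and the partition property.} First I would argue that the client sets $S$ are pairwise disjoint. The key observation, already emphasized in the paragraph preceding the lemma, is that after preprocessing, clients occur only at leaves of $T'$. When a cluster $S = \bigcup_{v \in L} V(T'[v]) \cap \clients$ is extracted, all nodes of $V(T_S) \setminus \{v'\}$ are deleted from $T'$, so the leaf-clients assigned to $S$ can never be reassigned to a later cluster; only the internal node $v'$ (a dummy node or facility, carrying no demand) survives. Hence the client sets never overlap, and since every client is eventually removed (the loop and the final facility-collection step together exhaust all positive-demand nodes), the family $\mathcal{S}$ is a genuine partition. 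For edge-disjointness, I would note that $T_S$ consists precisely of edges that are removed from $T'$ when $S$ is created (the subtree edges plus the connecting edges $\{v, v'\}$), and once removed an edge is gone, so $T_S \cap T_{S'} = \emptyset$ in edges; the containment $T_S \subseteq T'$ (as a subgraph of the \emph{original} tree) follows because we never add edges. The containment $S \subseteq V(T_S)$ is immediate since $S$ is defined as the client nodes of the very subtrees that make up $T_S$.

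\textbf{Demand bound.} This follows directly from the two selection rules. For clusters created inside the while-loop, $L$ is chosen so that $\sum_{v \in L} d_{T'}(v) \leq \bar{\bar{u}}$, and $\sum_{v \in S} d(v) = \sum_{v \in L} d_{T'}(v)$ by definition of $S$ and $d_{T'}$. For the facility-clusters added afterward, the loop terminates only when $d_{T'}(v) \leq \bar{\bar{u}}$ for all remaining $v$, in particular for each facility $w \in F_1$, and $\sum_{v \in S} d(v) = d_{T'}(w) \le \bar{\bar u}$ for $S = V(T'[w]) \cap \clients$. I would also recall here why the greedy selection of $L$ is always \emph{possible}: since $v'$ is chosen with $d_{T'}(v') > \bar{\bar{u}}$ while each child satisfies $d_{T'}(v) \leq \bar{\bar{u}}$, adding children one at a time we cross the threshold $\nicefrac{\bar{\bar{u}}}{2}$ without overshooting $\bar{\bar{u}}$, precisely because no single child contributes more than $\bar{\bar{u}}$.

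\textbf{The main obstacle --- the facility claim for $\mathcal{S}'$.} I expect the only genuinely delicate part to be the last statement: that every \emph{small} cluster $S \in \mathcal{S}'$ (with client-demand below $\nicefrac{\bar{\bar{u}}}{2}$) contains a unique facility $w_S$, with distinct small clusters getting distinct facilities. The point is that the clusters produced inside the while-loop always satisfy the lower bound $\sum_{v \in L} d_{T'}(v) \geq \nicefrac{\bar{\bar{u}}}{2}$, so they are \emph{never} in $\mathcal{S}'$; consequently every member of $\mathcal{S}'$ must be one of the facility-clusters $S = V(T'[w]) \cap \clients$ added in the final \texttt{for}-loop, which by construction contains the facility $w = w_S \in V(T_S)$. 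Uniqueness within a single cluster and distinctness across clusters both reduce to showing that no two facilities in $F_1$ lie in the same remaining subtree $T'[w]$: since $w_S = w$ is the root of $T_S = T'[w]$ and these are vertex-disjoint in their client content, I would argue that in the residual tree the facilities of $F_1$ sit at distinct roots of disjoint subtrees (recall the edges $\{r,w\}$ are retained, so each $w$ is a child of $r$), giving $w_S \neq w_{S'}$ for $S \neq S'$. The subtlety to handle carefully is that a facility could in principle appear as the survivor node $v'$ of an earlier while-loop cluster; I would note this causes no conflict, because such a cluster has demand $\geq \nicefrac{\bar{\bar{u}}}{2}$ and hence is not in $\mathcal{S}'$, so the facility is only ever claimed as $w_S$ by the unique facility-cluster rooted at it.
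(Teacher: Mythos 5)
Your proof is correct and takes essentially the same route as the paper, which treats the lemma as ``following immediately from its construction'' and justifies it with the same observations you make: clients occur only at leaves after preprocessing (hence disjoint client sets and edge-disjoint trees, with node overlap possible only at the surviving node $v'$), the greedy choice of $L$ yields the demand bounds, and every cluster of demand below $\nicefrac{\bar{\bar{u}}}{2}$ must be one of the facility clusters from the final loop, where each $w \in F_1$---being a child of $r$ in $T'$---is the unique facility in its residual subtree $T'[w]$. Your write-up simply spells these points out (including the subtlety that a facility surviving as $v'$ causes no conflict since while-loop clusters never lie in $\mathcal{S}'$) in more detail than the paper does.
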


\subsection{Step 2: Assignment}
\label{sec:assign_subp}

The second step of the algorithm constructs an assignment of the clusters constructed in the first step to a set of open facilities. 

\subsubsection*{The assignment LP}
To this end, the algorithm first computes an approximate solution to the \CFL{} instance described in \cref{lem:cfl_lb}.
Let us denote the set of facilities opened in this solution by $F_2$ and the corresponding assignment by $\tilde{x}$.
Recall further the clustering $\mathcal{S}$ and the set of facilities $F_1$ returned by the first step of the algorithm, and define $F' := F_1 \cup F_2$.
For $S \in \mathcal{S}$ and $w \in F$, we define $c(S, w) := \min_{v \in V(T_S)} c(v, w)$ and $d(S) := \sum_{v \in S} d(v)$.
Consider the following assignment LP that assigns the demand of each cluster produced by the first step of the algorithm to a facility in~$F'$:
\begin{equation}\label{LP_trans}
\begin{array}{rrcll}
\min & \displaystyle \sum_{S\in\mathcal{S}} \sum_{w\in F'} \frac{c(S,w)}{d(S)} x(S,w)  \\
\textrm{s.t.} & \displaystyle  \sum_{S\in\mathcal{S}} x(S,w) & \leq & u(w) & \forall w\in F' \\
& \displaystyle  \sum_{w\in F} x(S,w) & = & d(S) & \forall S\in\mathcal{S} \\
& x(S,w) & \geq & 0 & \forall S \in \mathcal{S},w\in F'
\end{array}
\end{equation}
In the proof of the following lemma, we show that the {\CFL} solution induces a feasible solution to~\eqref{LP_trans} of bounded cost.

\begin{lemma}\label{lem:assignment-lp}
  The linear program \eqref{LP_trans} has a feasible solution with objective function value at most $\frac{1}{\varepsilon}\sum_{v \in \clients} \sum_{w \in F'} \tilde{c}(v, w) \tilde{x}(v, w) + \sum_{S \in \mathcal{S}'} c(T_S)$.
\end{lemma}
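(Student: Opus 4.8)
The plan is to prove the lemma by exhibiting a single explicit feasible point of \eqref{LP_trans} and bounding its objective. The natural candidate is obtained by aggregating the \CFL{} solution $\tilde{x}$ (extended by $\tilde{x}(v,w)=0$ for $w\in F_1\setminus F_2$) at the level of clusters, treating the two cluster types differently. For a \emph{large} cluster $S\in\mathcal{S}\setminus\mathcal{S}'$ (those with $d(S)\ge\nicefrac{\bar{\bar{u}}}{2}$) I set $x(S,w):=\sum_{v\in S}\tilde{x}(v,w)$. For a \emph{small} cluster $S\in\mathcal{S}'$ I route its entire demand to the facility $w_S\in V(T_S)$ guaranteed by \cref{lem:clustering}, i.e.\ $x(S,w_S):=d(S)$ and $x(S,w):=0$ otherwise. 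Nonnegativity is immediate, and the demand constraints $\sum_{w\in F'}x(S,w)=d(S)$ hold in both cases (for large clusters because $\sum_{w}\tilde{x}(v,w)=d(v)$ and $d(S)=\sum_{v\in S}d(v)$). The crucial gain from routing small clusters to $w_S$ is that $c(S,w_S)=\min_{v'\in V(T_S)}c(v',w_S)=0$, so these clusters contribute nothing to the objective.

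For the objective I would bound the two groups separately. For a large cluster $S$, since $S\subseteq V(T_S)$ we have $c(S,w)\le c(v,w)$ for every $v\in S$, so
\[
\frac{1}{d(S)}\sum_{w\in F'}c(S,w)\,x(S,w)\;\le\;\frac{1}{d(S)}\sum_{v\in S}\sum_{w\in F'}c(v,w)\,\tilde{x}(v,w).
\]
Substituting $c(v,w)=\tfrac{\bar{u}}{2}\tilde{c}(v,w)$ and using $d(S)\ge\nicefrac{\bar{\bar{u}}}{2}=\nicefrac{\varepsilon\bar{u}}{2}$ turns the prefactor $\tfrac{\bar{u}}{2\,d(S)}$ into at most $\tfrac{1}{\varepsilon}$. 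Summing over the large clusters, whose client sets are disjoint subsets of $\clients$, yields at most $\tfrac{1}{\varepsilon}\sum_{v\in\clients}\sum_{w\in F'}\tilde{c}(v,w)\tilde{x}(v,w)$, the first term of the claimed bound. Since the small clusters add $0$ to the objective, the remaining budget $\sum_{S\in\mathcal{S}'}c(T_S)$ is entirely free to absorb the cost of restoring capacity.

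The hard part will be capacity feasibility: routing a small cluster $S$ entirely to $w_S$ can overload $w_S$, because $\tilde{x}$ may already place demand there. A direct computation shows that the excess at $w_S$ is at most $\sum_{v\in S}\sum_{w\neq w_S}\tilde{x}(v,w)\le d(S)$, that the facilities $w_S$ are distinct across $S\in\mathcal{S}'$, and that the aggregation frees at least this much capacity elsewhere (so that total demand never exceeds $\sum_{w\in F'}u(w)$, which holds since $\tilde{x}$ is \CFL{}-feasible). I would therefore repair the candidate by rerouting the displaced demand away from each overloaded $w_S$ to facilities with residual slack. To keep this within budget I would charge the rerouting cost via the triangle inequality together with the bound $c(v,w_S)\le c(T_S)$ for all $v\in S$ (valid since $T_S=T'[w_S]$ connects each such $v$ to $w_S$), splitting the surcharge between the tree term $\sum_{S\in\mathcal{S}'}c(T_S)$ and the slack already present in the large-cluster budget. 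Making this rerouting precise---ideally by phrasing \eqref{LP_trans} as a transportation problem and bounding its optimum by the cost of a single carefully chosen feasible flow, rather than patching the infeasible aggregate after the fact---is the main technical obstacle; everything else is bookkeeping.
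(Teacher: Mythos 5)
Your candidate solution and the cost bookkeeping are exactly right and coincide with how the paper finishes: small clusters $S \in \mathcal{S}'$ are routed to their private facility $w_S$ at cost $c(S, w_S) = 0$, the remaining demand is aggregated at cluster level, and $d(S) \geq \nicefrac{\bar{\bar{u}}}{2} = \nicefrac{\varepsilon\bar{u}}{2}$ together with $c = \frac{\bar{u}}{2}\tilde{c}$ yields the $\frac{1}{\varepsilon}$ factor. However, the proof has a genuine gap at precisely the point you flag as ``the main technical obstacle'': you never establish that the displaced demand can be rerouted within the stated budget, and that rerouting statement \emph{is} the substance of the lemma---everything you do prove (nonnegativity, demand constraints, the cost of the unrepaired candidate) is routine. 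The paper settles it by constructing an explicit flow network (\cref{claim:flow}): facilities are sources with supplies \emph{reduced} to $s(w) = u(w) - d(S_w)$, clients of large clusters are sinks with demands $d(v)$, the edges are the support of $\tilde{x}$ with capacities $\tilde{x}(v,w)$ together with the tree edges of the small clusters with capacity $d(S)$, and the existence of a feasible flow is verified via the cut condition of the max-flow/min-cut theorem, using feasibility of $\tilde{x}$ and the observation that whenever a cut separates $w \in A \cap F_1$ from part of $S_w$, at least one edge of $T_{S_w}$, of capacity $d(S_w)$, crosses the cut. The LP solution for large clusters is then read off from the flow $\bar{y}$, \emph{not} from $\tilde{x}$ as in your candidate, and the tree-edge capacities cap the rerouting surcharge at $\frac{2 d(S)}{\bar{\bar{u}}} c(T_S) \leq c(T_S)$ per small cluster.

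Be aware also that the one-hop repair you sketch would not suffice as stated: the facility $w'$ whose slack $\tilde{x}(v, w')$, $v \in S$, you want to use may itself be $w_{S''}$ for another small cluster and hence be overloaded, so pushing displaced demand into it creates new displacement and the repair cascades---in general a unit of displaced demand must traverse several \CFL{} support edges and several small-cluster trees before finding genuine residual capacity. Your triangle-inequality charge $c(v^*, w') \leq c(v^*, w_S) + c(T_S) + c(v, w')$ covers only a single intermediate tree. The flow formulation handles arbitrary path lengths cleanly because the cost is charged edgewise against capacities, $\sum_{e} c(e) u(e)$, independently of how long the paths are, and the existence question is resolved globally by the cut condition rather than by greedily matching displaced demand to slack. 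To complete your argument you would essentially have to state and prove the paper's \cref{claim:flow}; as it stands, your proposal is a correct reduction of \cref{lem:assignment-lp} to an unproven rerouting claim, not a proof.
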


\begin{proof}
To prove the lemma, we first construct an instance of a network flow problem in an undirected graph $\bar{G} = (\bar{V}, \bar{E})$ on the node set $\bar{V} := V(T') \setminus \{r\}$ (thus, $V'$ contains all clients and facilities, as well as potential dummy nodes introduced in the preprocessing step).
The edge set $\bar{E} = \bar{E}_1 \cup \bar{E}_2$ consists of the edges $\bar{E}_1 := \{\{v, w\} : v \in \clients, w \in \depots, \tilde{x}(v, w) > 0\}$ representing the support of the {\CFL} assignment and $\bar{E}_2 := \bigcup_{S \in \mathcal{S}'} T_S$, i.e., the union of all trees $T_S$ corresponding to clusters $S$ with low demand $d(S) < \nicefrac{\bar{\bar{u}}}{2}$.

For $e = \{v, w\} \in \bar{E}_1$ we set the capacity $u(e) := \tilde{x}(v, w)$ and for $e \in \bar{E}_2$ we set the capacity $u(e) := d(S)$, where $S \in \mathcal{S}'$ is the unique cluster with $e \in T_S$.

For $w \in F'$ let $S_w$ denote the unique cluster $S \in \mathcal{S}'$ with $w_S = w$, if it exists, and let $S_w = \emptyset$ otherwise.
We will consider each facility $w \in F'$ as a source with supply $s(w) := u(w) - d(S_w)$. We will consider any client $v \in \bar{\clients} := \clients \setminus \bigcup_{S \in \mathcal{S}'} S$ that is not in a cluster of low demand as a sink  with demand $r(v) := d(v)$.

The following claim establishes the existence of a flow in $\bar{G}$ that saturates all these demands while not exceeding any edge capacity or any supply at a facility. To make this formal, let $\mathcal{P}_{vw}$ for any $v, w \in \bar{V}$ denote the set of $v$-$w$-paths in the graph $\bar{G}$ and let $\mathcal{P} := \bigcup_{v \in \bar{\clients}, w \in F'} \mathcal{P}_{wv}$.

\begin{claim}\label{claim:flow}
    There exists a flow $y \in \mathbb{R}_+^\mathcal{P}$
    such that $\sum_{P \in \mathcal{P} : e \in P} y(P) \leq u(e)$ for all $e \in E$,
    $\sum_{v \in \bar{\clients}} \sum_{P \in \mathcal{P}_{vw}} y(P) \leq s(w)$ for all $w \in F'$, and $\sum_{w \in F'} \sum_{P \in \mathcal{P}_{vw}} y(P) \geq r(v)$ for all $v \in \bar{\clients}$.
\end{claim}

We postpone the proof of the claim and first show how it implies the lemma.
Note that by choice of the capacities $u$, the cost of the flow $y$ in terms of $c$ is bounded by
\begin{align}
  \sum_{e \in \bar{E}} c(e) \!\!\!\!\! \sum_{P \in \mathcal{P} : e \in P} \!\!\!\!\!\! y(P)  & \leq \sum_{v \in \clients} \sum_{w \in F'} c(v, w) \tilde{x}(v, w) + \sum_{S \in \mathcal{S}'} \sum_{\{v, w\} \in T_S} \!\!\!\!\! c(v, w) d(S).\label{eq:flow-cost}
\end{align}

Let $\bar{y}(v, w) := \sum_{P \in \mathcal{P}_{vw}} y(P)$ be the amount of flow that is sent from $w \in F'$ to $v \in \bar{\clients}$ in the flow $y$.
Consider the vector $x$ defined by $x(S, w_S) := d(S)$ for all $S \in \mathcal{S}'$ and $x(S, w) := \sum_{v \in S} \bar{y}(v, w)$ for all $S \in \mathcal{S} \setminus \mathcal{S}'$ and $w \in F'$, with $x(S, w) = 0$ for all other pairs $S \in \mathcal{S}'$ and $w \neq w_S$.
Note that this vector is a feasible solution to \eqref{LP_trans} because $\sum_{S \in \mathcal{S}} x(S, w) = d(S_w) + \sum_{v \in \bar{\clients}} \bar{y}(v, w) \leq d(S_w) + s(w) = u(w)$ for all $w \in F'$, $\sum_{w \in F'} x(S, w) = d(S)$ for all $S \in \mathcal{S}'$, and
$\sum_{w \in F'} x(S, w) = \sum_{w \in F'} \bar{y}(v, w) = r(v) = d(v)$ for all $v \in \bar{\clients}$.
Moreover,
\begin{align}
  \sum_{S\in\mathcal{S}} \sum_{w\in F'} \frac{c(S,w)}{d(S)} x(S,w) & \leq \sum_{S\in \mathcal{S} \setminus \mathcal{S}'} \sum_{w\in F'} \sum_{v \in S} \frac{c(v,w)}{d(S)} \bar{y}(v,w) \notag\\
    & \leq \frac{2}{\bar{\bar{u}}} \sum_{S\in \mathcal{S} \setminus \mathcal{S}'} \sum_{w\in F'} \sum_{v \in S} c(v, w) \sum_{P \in \mathcal{P}_{vw}} \!\! y(P) \notag\\
    & \leq \frac{2}{\bar{\bar{u}}} \sum_{w \in F'} \sum_{v \in {\clients}} \sum_{P \in \mathcal{P}_{vw}} \sum_{e \in P} c(e) y(P)\notag \\
    & = \frac{2}{\bar{\bar{u}}} \sum_{e \in \bar{E}} \sum_{P \in \mathcal{P} : e \in P} c(e) y(P),\label{eq:solution-x-cost}
\end{align}
where the first inequality follows from $c(S, w_S) = 0$ for all $S \in \mathcal{S}'$, the second follows from the fact that $d(S) \geq \bar{\bar{u}}/2$ for all $S \in \mathcal{S} \setminus \mathcal{S}'$, and the third follows from the fact that $c$ is a metric and hence $c(v, w) \leq \sum_{e \in P} c(e)$ for any $P \in \mathcal{P}_{vw}$.
Combining \eqref{eq:flow-cost} and \eqref{eq:solution-x-cost} yields 
\begin{align*}
\sum_{S\in\mathcal{S}} \sum_{w\in F'} \frac{c(S,w)}{d(S)} x(S,w) & \leq \frac{1}{\varepsilon}\sum_{v \in \clients} \sum_{w \in F'} \tilde{c}(v, w) \tilde{x}(v, w) + \sum_{S \in \mathcal{S}'} \sum_{\{v, w\} \in T_S} c(v, w)
\end{align*}
using the definition of $\tilde{c}$ and the fact that $\nicefrac{d(S)}{\bar{\bar{u}}} \leq \nicefrac{1}{2}$ for $S \in \mathcal{S}'$.
We thus established that \cref{claim:flow} implies \cref{lem:assignment-lp}. The proof of the former is given below.
\end{proof}

\begin{proof}[Proof of \cref{claim:flow}.]
By the max-flow/min-cut theorem, a flow satisfying the demands $r$ while respecting supply limits $s$ and capacities $u$ exists if and only if
$$\sum_{v \in A \cap \bar{\clients}} r(v) - \sum_{w \in A \cap F'} s(w) \leq \sum_{e \in \delta(A)} u(e)$$ for any node set $A \subseteq V'$, where $\delta(A)$ is the cut induced by $A$, i.e., the set of edges $e \in \bar{E}$ with exactly one endpoint in $A$ and the other in $V' \setminus A$. It thus suffices to show that the above inequality holds for any $A \subseteq V'$.

Let $A \subseteq V'$ and note that
\begin{align*}
  \sum_{v \in A \cap \bar{\clients}} r(v) - \sum_{w \in A \cap F'} s(w) & = \sum_{v \in A \cap \clients} d(v) - \sum_{S \in \mathcal{S}'} \sum_{v \in A \cap S} d(v) - \sum_{w \in A \cap F'} u(w) - d(S_w)\\
  & = \sum_{v \in A \cap \clients} d(v) - \sum_{w \in A \cap F'} u(w) + \sum_{w \in A \cap F_1} d(S_w \setminus A)\\
  & \leq \sum_{v \in A \cap \clients} \sum_{w \in F' \setminus A} \tilde{x}(v, w) + \sum_{w \in A \cap F_1} d(S_w \setminus A),
\end{align*}
where the last inequality follows from the fact that $\tilde{x}$ is a feasible assignment for the {\CFL} instance with open facilities $F_2 \subseteq F'$.
Finally, note that
$\sum_{v \in A \cap \clients} \sum_{w \in F' \setminus A} \tilde{x}(v, w) \leq \sum_{e \in \delta(A) \cap E_1} u(e)$
and that $\delta(A) \cap \bar{E}_2$ contains at least one edge $e \in T_{S_w}$ for every $w \in A \cap F_1$ for which $S_w \setminus A \neq \emptyset$ (see \cref{fig:max-flow-min-cut} for an illustration). Because any such edge $e \in T_{S_w} \subseteq \bar{E}_2$ has capacity $u(e) = d(S_w)$, we obtain $\sum_{w \in A \cap F_1} d(S_w \setminus A) \leq \sum_{e \in \delta(A) \cap \bar{E}_2} u(e)$, proving the claim.
\end{proof}

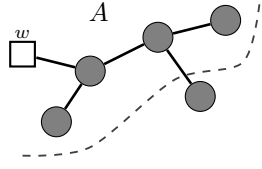
\begin{figure}[t]
\centering
\begin{tikzpicture}[
scale=.45,
customer/.style={circle, draw=black, fill=gray, scale = 1.2},
facility/.style={rectangle, draw=black, thick, scale = 1.4},
]
  
  \node[rectangle] at (2.25,3.25) {$A$};
  \node[rectangle] at (0,2.6) {\scriptsize $w$};
  \node[facility] (o1) at (0,2) {};
  \node[customer] (c1) at (2,1.5) {};
  \node[customer] (c2) at (4,2.5) {};
  \node[customer] (c3) at (6,3) {};
  \node[customer] (c4) at (1,0) {};
  \node[customer] (c5) at (5.25,0.75) {};
    
  \draw[line width=1pt, color=black] (o1) to (c1);
  \draw[line width=1pt, color=black] (c1) to (c2);
  \draw[line width=1pt, color=black] (c2) to (c3);
  \draw[line width=1pt, color=black] (c1) to (c4);
  \draw[line width=1pt, color=black] (c2) to (c5);
  
  \draw [line width=0.7pt, black!75, dashed] plot [smooth] coordinates {(0,-1) (2,-0.75) (4.5,1.25) (6.5,1.75) (7,3.5)};
 
\end{tikzpicture}
\caption{The cut induced by set $A$ in the proof of \cref{claim:flow}. If $w \in A \cap F_1$ but $S_w \setminus A \neq \emptyset$, then at least one edge of $T_{S_w}$ crosses the cut.}
\label{fig:max-flow-min-cut}
\end{figure}

\subsubsection*{Rounding the assignment}
The algorithm computes an optimal extreme point solution $x$ to LP \eqref{LP_trans}. The following rounding procedure transforms $x$ into a new solution $x'$ with $x'(S, w) \in \{0, d(S)\}$ for all $S \in \mathcal{S}$ and $w \in F'$.
Starting with $x' := x$, the rounding procedure maintains a helper graph $G_{x'} = (\mathcal{S} \cup F', E_{x'})$ with edge set $E_{x'} = \{\{S, w\} : S \in \mathcal{S}, w \in F', 0 < x'(S, w) < d(S)\}$, which is updated along with the solution.
As long as $E_{x'} \neq \emptyset$, it iteratively applies the following modification to $x'$ (which intuitively is a flow augmentation along a path in the helper graph): 
Let $w, w' \in F'$ be two facilities such that both $w$ and $w'$ are incident to exactly one edge in $G_x$ and such that there is a $w$-$w'$-path $P$ in $G_x$ (it is argued that in the proof of \cref{lem:flow-rounding} that these always exist while the graph is non-empty).
Let $w_0 := w, S_1, w_1, \dots, w_{k-1}, S_k, w_k := w'$ denote the nodes along path $P$ in order of traversal from $w$ to $w'$ and let $I := \{(S_i, w_{i-1}) : i \in \{1, \dots, k\}$ and $I' := \{(S_i, w_{i}) : i \in \{1, \dots, k\}$.
Without loss of generality (by swapping the roles of $w$ and $w'$ if necessary), we can assume that $\sum_{(S, w) \in I} c(S, w) \leq \sum_{(S, w) \in I'} c(S, w)$. Let $\Delta := \min \{d(S) - x(S, w) : (S, w) \in I\} \cup \{x(S, w) : (S, w) \in I'\}$ and update $x$ by increasing $x(S, w)$ by $\Delta$ for all $(S, w) \in I$ and decreasing $x(S, w)$ by $\Delta$ for all $(S, w) \in I'$.

\begin{algorithm}[t]
\DontPrintSemicolon
\caption{Assignment} \label{alg:assignment}
Let $x$ be an optimal extreme point solution to \eqref{LP_trans}. Initialize $x' := x$.\;
\While{$\textup{there is } S \in \mathcal{S} \textup{ and } w \in F' \textup{ with } 0 < x'(S, w) < d(S)$}{
Let $E_{x'} := \{\{S, w\} : S \in \mathcal{S}, w \in F', 0 < x'(S, w) < d(S)\}$.\;
Let $w, w' \in F'$ such that both $w$ and $w'$ have degree $1$ and there exists a $w$-$w'$-path $P = (w_0, S_1, w_1, \dots, S_k, w_k)$ in the graph $G_{x'} = (\mathcal{S} \cup F', E_{x'})$. \quad (note: $w_0 = w,\ w_k = w'$)\;
Let $I := \{(S_i, w_{i-1}) : i \in \{1, \dots, k\}\}$ and $I' := \{(S_i, w_{i}) : i \in \{1, \dots, k\}\}$.\;
\If{$\sum_{(S, w) \in I} c(S, w) \leq \sum_{(S, w) \in I'} c(S, w)$}{
Let $\Delta := \min \{d(S) - x'(S, w) : (S, w) \in I\} \cup \{x'(S, w) : (S, w) \in I'\}$.\; 
Let $x'(S, w) := x'(S, w) + \Delta$ for all $(S, w) \in I$.\;
Let $x'(S, w) := x'(S, w) - \Delta$ for all $(S, w) \in I'$.\;
}
\Else{
Let $\Delta := \min \{d(S) - x'(S, w) : (S, w) \in I'\} \cup \{x'(S, w) : (S, w) \in I\}$.\; 
Let $x'(S, w) := x'(S, w) + \Delta$ for all $(S, w) \in I'$.\;
Let $x'(S, w) := x'(S, w) - \Delta$ for all $(S, w) \in I$.\;
}
}
\Return{$x'$}
\end{algorithm}

\bigskip

The following lemma shows that that this procedure terminates after a linear number of iterations and results in a solution $x'$ whose cost is at most that of $x$ and which allocates at most a demand of $u + \bar{\bar{u}}$ to each facility.\footnote{We remark that procedures yielding the same guarantees as those in Lemma~\ref{lem:flow-rounding} are implied by numerous rounding procedures for more general linear programs, such as the classic $2$-approximation for the generalized assignment problem~\citep{shmoys1993approximation} or 
more recent approximation results for single-source unsplittable flow~\citep{morell2020single}. 
However, the procedure outlined here is specifically tailored to the special case resulting from LP~\eqref{LP_trans}, yielding a simpler and more efficient algorithm.}

\begin{lemma}
  \label{lem:flow-rounding}
  The assignment procedure (\cref{alg:assignment}) computes in polynomial time (after at most $|\mathcal{S}| + |F'| - 1$ iterations) a vector $x'$ with $x'(S, w) \in \{0, d(S)\}$ for all $S \in \mathcal{S}$ and all $w \in F'$ such that 
  \begin{itemize}
    \item $\sum_{S \in \mathcal{S}} \sum_{w \in F'} \frac{c(S, w)}{d(S)} x'(S, w) \leq \sum_{S \in \mathcal{S}} \sum_{w \in F'} \frac{c(S, w)}{d(S)} x(S, w)$,
    \item $\sum_{w \in F'} x'(S, w) = d(S)$ for all $S \in \mathcal{S}$, and
    \item $\sum_{S \in \mathcal{S}} x'(S, w) \leq \bar{\bar{u}} + \sum_{S \in \mathcal{S}} x(S, w)$ for all $w \in F'$.
  \end{itemize}
\end{lemma}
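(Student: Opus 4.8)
The plan is to treat LP~\eqref{LP_trans} as a transportation-type LP on the bipartite incidence structure between the clusters $\mathcal{S}$ and the facilities $F'$, and to analyze the augmentation loop via the helper graph $G_{x'}$. First I would record the standard fact that any extreme point of a transportation polytope has acyclic support; hence the initial set of fractional edges $E_x = \{\{S,w\} : 0 < x(S,w) < d(S)\}$ is a forest on the node set $\mathcal{S} \cup F'$ and contains at most $|\mathcal{S}| + |F'| - 1$ edges.

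Next I would establish the invariants preserved by a single iteration. Because the augmentation alternately adds and subtracts $\Delta$ along the path $P = (w_0, S_1, \dots, S_k, w_k)$, every cluster $S_i$ has exactly one incident path-edge increased (in $I$) and one decreased (in $I'$), so $\sum_{w \in F'} x'(S,w) = d(S)$ is preserved for every $S$; this is exactly the second bullet. The choice of $\Delta$ as a minimum keeps $0 \le x'(S,w) \le d(S)$ and forces at least one path-edge to one of its bounds, so that edge leaves $E_{x'}$; conversely, since only path-edges (which already lie in $E_{x'}$) are modified, no new fractional edge is ever created, so $E_{x'}$ shrinks monotonically and $G_{x'}$ remains a sub-forest. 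Each iteration thus strictly decreases $|E_{x'}|$, giving the bound of $|\mathcal{S}| + |F'| - 1$ iterations and polynomial running time, and at termination $E_{x'} = \emptyset$, which is precisely $x'(S,w) \in \{0, d(S)\}$.

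Then I would justify that the loop can always find the required $w, w', P$ while $E_{x'} \neq \emptyset$. Using $\sum_{w} x'(S,w) = d(S)$ together with the fact that every edge outside $E_{x'}$ carries value $0$ or $d(S)$, a cluster incident to a fractional edge cannot also carry a full edge (else its total would exceed $d(S)$), so it must have at least two fractional edges; thus every node of $\mathcal{S}$ has degree $0$ or $\ge 2$ in $G_{x'}$. Consequently all leaves of the forest $G_{x'}$ are facilities, and any component containing an edge has two facility-leaves joined by a path, which supplies $w, w', P$. The first bullet (cost non-increase) then follows from the if/else rule: the contribution of one augmentation to the objective is $\sum_i \tfrac{\Delta}{d(S_i)}\bigl(c(S_i, w_{i-1}) - c(S_i, w_i)\bigr)$, and always increasing the cheaper of the two sides $I, I'$ keeps this non-positive, so summing over iterations the objective value of the returned $x'$ is at most that of $x$.

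The main obstacle, and the step I would treat most carefully, is the capacity bound in the third bullet. The key structural observation is that the load $\sum_{S} x'(S,w)$ of a facility $w$ is frozen while $\deg_{G_{x'}}(w) \ge 2$: such a $w$ is never chosen as an endpoint, so it is either off the path (untouched) or internal, in which case exactly two of its edges lie on $P$, one in $I$ and one in $I'$, changing its load by $+\Delta - \Delta = 0$. Since edges are only removed, degrees are monotone non-increasing, so once $w$ drops to degree $\le 1$ it stays there; if it reaches degree $0$ its load is frozen at its initial value, and if it reaches degree $1$ it retains a single edge $\{S^\ast, w\}$ through which all further load changes occur, with $x'(S^\ast, w)$ confined to $[0, d(S^\ast)]$. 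Hence the net change of $w$'s load over the whole run equals $x'_{\text{final}}(S^\ast, w) - a$, where $a \in (0, d(S^\ast))$ is the value of that edge when $w$ first had degree $1$; this is at most $d(S^\ast) \le \bar{\bar{u}}$ by \cref{lem:clustering}, giving $\sum_{S} x'(S,w) \le \sum_{S} x(S,w) + \bar{\bar{u}}$ as claimed.
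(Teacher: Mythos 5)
Your proof is correct and follows essentially the same route as the paper's: extreme points of the transportation LP have forest support, cluster nodes never have degree one in $G_{x'}$ (so two facility leaves joined by a path always exist), each iteration fixes at least one edge so the forest shrinks and the process stops within $|\mathcal{S}| + |F'| - 1$ iterations, and a facility's load is frozen while its degree is at least two, so after it first becomes a leaf the load can grow only through its single remaining edge, by less than $d(S^\ast) \leq \bar{\bar{u}}$. If anything, your cost step is stated slightly more carefully than the paper's, since you display the objective change with the $\nicefrac{1}{d(S_i)}$ weights that the lemma's first bullet actually involves.
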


\begin{proof}
	Note that any modification applied to $x'$ in the algorithm does not increase the cost and keeps $\sum_{w \in F'} x'(S, w)$ invariant for all $S \in \mathcal{S}$ (because the selected path always starts and ends at a facility).
	Hence if the algorithm terminates, the first and second condition are automatically met.

	Initially, because $x$ is an extreme point solution to a transportation problem, the graph $G_{x'}$ is a forest before the start of the first iteration.
	During the course of the algorithm, once a variable $x'(S, w)$ is set to either $0$ or $d(S)$ its value is never changed again (because the corresponding edge cannot occur in the path $P$ anymore).
	Hence, in any iteration, the set of edges in $E_{x'}$ is a subset of the edges from the previous iteration, and $G_{x'}$ is a forest throughout the run of the algorithm.
  
  We further observe that no cluster node $S \in \mathcal{S}$ can have degree $1$ in $G_{x'}$, as $0 < x'(S, w) < d(S)$ for some $w$ implies that there must be at least one other $w'$ with $0 < x'(S, w) < d(S)$.
  Hence, any non-singleton connected component of $G_{x'}$ is a tree whose leafs are elements of $F'$.
  Hence, as long as $E_{x'}$ is non-empty, there are indeed $w, w' \in F'$ with degree $1$ that are connected by a $w$-$w'$-path in $G_{x'}$.
  
  Furthermore, note that $\sum_{S \in \mathcal{S}} x'(S, w)$ for some $w \in F'$ can only change while $w$ is a leaf in $G_{x'}$ (because the changes in the variables $x'$ cancel out for $w$ when it occurs in the interior of the path $P$).
  Consider the first iteration when $w \in F'$ becomes a leaf (if any such iteration exists) and note that $\sum_{S \in \mathcal{S}} x'(S, w) = \sum_{S \in \mathcal{S}} x(S, w) \leq u(w)$ at the begin of this iteration.
  Let $S' \in \mathcal{S}$ be the unique cluster such that $\{S', w\} \in E_{x'}$.
  Because for any $S \neq S'$, the variable $x'(S, w)$ will not be changed anymore and $x'(S', w)$ is never increased beyond $d(S') \leq \bar{\bar{u}}$, we obtain $\sum_{S \in \mathcal{S}} x'(S, w) \leq \sum_{S \in \mathcal{S}} x(S, w) +\bar{\bar{u}}$ throughout the algorithm.
  
	Finally, by choice of $\Delta$, there is at least one variable $x'(S, w)$ in each iteration whose value is set to either $0$ or $d(S)$.
	Hence the number of edges in $E_{x'}$ decreases by at least one in each iteration.
	Because $E_{x'}$ is a forest, it contains at most $|\mathcal{S}| + |F'| - 1$ edges initially and the algorithm terminates after $|\mathcal{S}| + |F'| - 1$ iterations.
	Note further that $E_{x'} = \emptyset$ implies $x'(S, w) \in \{0, d(S)\}$ for all $S \in \mathcal{S}$ and $w \in F'$.
\end{proof}

\subsection{Step 3: Constructing tours}
In the final step of the algorithm, the trees covering each cluster are connected to the facilities as indicated by the rounded assignment $x'$. For each cluster, the resulting tree is turned into a tour via the classic edge-doubling-and-shortcutting procedure.
For every $S \in \mathcal{S}$, let $w_S \in F'$ be the unique facility with $x'(S, w_S) = d(S)$ and let $v \in V(T_S)$ be such that $c(v, w_S) = c(S, w_S)$.
Define $\bar{T}_S := T_S \cup \{v, w_S\}$.
Note that is a connected graph $\bar{T}_S$ spanning (a superset of) the nodes in $S \cup \{w_S\}$ and hence a tour visiting the clients in $S$ and facility $w_S$ of length at most $2\big(c(T_S) + c(S, w_S)\big)$ can be computed using the classic double tree algorithm:
(1) double the edges in $\bar{T}_S$, (2) compute a Eulerian tour on the doubled tree, (3) shortcut the tour by skipping any nodes noting $(S \cap \clients) \cup \{w_S\}$ and any repeated visits of a node.
The algorithm returns the resulting tours together with the set of facilities $F'$ to be opened.

\subsubsection*{Analysis}
In the constructed solution, the total demand on all tours based at facility $w \in F'$ is $$\sum_{\stackrel[w_S = w]{}{S \in \mathcal{S}}} d(S) = \sum_{S \in \mathcal{S}} x'(S, w) \leq \bar{\bar{u}} + \sum_{S \in \mathcal{S}} x'(S, w) \leq u(w) + \varepsilon \bar{u}$$ by \cref{lem:flow-rounding}.
Furthermore, the cost of the tours constructed by the algorithm is bounded by
$\sum_{S \in \mathcal{S}} 2 (c(T_S) + c(S, w_s))$. 
Note that $\sum_{S \in \mathcal{S}} c(T_S) \leq c(T')$ because the trees $T_S$ for $S \in \mathcal{S}$ are edge-disjoint by \cref{lem:clustering}.
Moreover,
\begin{align*}
  \sum_{S \in \mathcal{S}} c(S, w_S) = \sum_{S \in \mathcal{S}} \frac{c(S, w)}{d(S)} x'(S, w) \leq \frac{1}{\varepsilon}\sum_{v \in \clients} \sum_{w \in F'} \tilde{c}(v, w) \tilde{x}(v, w) + \sum_{S \in \mathcal{S}'} c(T_S)
\end{align*}
by \cref{lem:assignment-lp,lem:flow-rounding}.
We conclude that the total cost of the constructed tours is bounded by $4c(T') + \frac{2}{\varepsilon}\sum_{v \in \clients} \sum_{w \in F'} \tilde{c}(v, w) \tilde{x}(v, w)$.
Finally, note that every facility $w \in F_1$ is incident to at least one edge $e \in T'$ other than $\{r, w\}$ and hence
$c(T') + \sum_{w \in F_1} f(w) \leq c'(T)$.
Combining these observations, we can bound the total cost of the produced solution by
\begin{align*}
  \sum_{w \in F'} f(w) + \sum_{T \in \mathcal{T}} c(T) & \leq 4c'(T') + \sum_{w \in F_2} f(w) + \frac{2}{\varepsilon} \sum_{v \in \clients} \sum_{w \in F'} \tilde{c}(v, w) \tilde{x}(v, w)\\
  & \leq \left(4 + \frac{2\alpha}{\varepsilon}\right) \operatorname{OPT},
\end{align*}
where $\alpha$ is the approximation factor for the {\CFL} solution computed in step~$2$.

We further note that clients $v \in \clients$ whose demand in the original input was bounded by $\varepsilon \bar{u}$ were not split during the preprocessing and assigned to a single tour in the algorithm.
As each of the three steps can be executed in polynomial time, this concludes the analysis of the algorithm and the proof of \cref{thm:approximation}.

\section{Algorithm Variants and Heuristic Improvements}
\label{sec:variants}

In this section, we discuss some heuristic modifications and variants of the algorithm that, while not yielding better worst-case guarantees, help to improve the practical performance of the algorithm. We will assess the effect of these modifications in \cref{sec:computational}.

\subsection{Accounting for Facilities Opened in Step~1} \label{sec:free_fac}
The algorithm presented in the preceding section computes two sets of facilities, $F_1$ in Step~1 and $F_2$ in Step~2, and the final solution opens all facilities in $F_1 \cup F_2$. As the set of facilities in $F_1$ is already known at the beginning of Step~2, the cost of those facilities can be reduced to $0$ in the \CFL{} instance that is solved in that step. This encourages the algorithm to make use of the facilities opened in Step~1 instead of opening additional facilities, heuristically reducing the opening cost of the produced solution.

\subsection{Reducing \CFL{} Instance Sizes} \label{sec:reduce_size}

Step~2 of the algorithm presented in \cref{sec:algorithm} requires approximately solving a \CFL{} instance of the same size as the original \LRVDC{} instance in order to determine the set $F_2$. While polynomial-time constant factor approximation algorithms for \CFL{} exist, these local-search based algorithms still require significant computational effort and dominate the running time of our algorithm as instance sizes increase.
As an alternative approach, one can solve the smaller \CFL{} instance obtained from merging the clusters computed in Step~1 of the algorithm, i.e., similarly to the assignment LP~\eqref{LP_trans}, we define $\tilde{c}(S, w) := \min_{v \in V(T_S)} \tilde{c}(v, w)$ and $d(S) := \sum_{v \in S} d(v)$ for $S \in \mathcal{S}$ and $w \in F$. 
Not only does this modification reduce the number of clients in the \CFL{} instance roughly by a factor of $\bar{\bar{u}}$, the resulting instance also intuitively anticipates the use of the computed set $F_2$ in the construction of the final solution, where facilities are indeed connected to clusters rather then individual clients. It is thus to be expected that this modification will in fact reduce not only computation time but also solution cost.\footnote{We remark that the resulting \CFL{} instance does not necessarily fulfill the triangle inequality, however, most \CFL{} algorithms can still be applied as heuristics in this case, as this property is only used in their analysis.}

\subsection{Integer Program for Combined \CFL{} and Assignment} \label{sec:cfl_ip}
Instead of approximately solving a \CFL{} instance to determine a set of facilities to be opened and then applying the rounding procedure to the solution of \eqref{LP_trans} in order to assign clusters to open facilities, these two steps can be directly addressed in a single step using the following integer program:
\begin{equation}\label{ip_cfl}
\begin{array}{rrcll}
\min & \displaystyle \sum_{S\in\mathcal{S}} \sum_{w\in F} 2c(S,w) y(S,w) & + & \sum_{w \in \depots} f(w) z(w)  \\
\textrm{s.t.} & \displaystyle  \sum_{S\in\mathcal{S}} d(S) y(S,w) & \leq & u(w)z(w) & \forall w\in \depots \\
& \displaystyle  \sum_{w\in F} y(S,w) & = & 1 & \forall S\in\mathcal{S} \\
& y(S,w) & \in & \{0, 1\} & \forall S \in \mathcal{S}, w\in \depots\\[3pt]
& z(w) & \in & \{0, 1\} & \forall w\in \depots
\end{array}
\end{equation}
Here, $c(S,w)$ is defined in the same way as in \cref{sec:assign_subp}.
The factor $2$ in the objective function coefficients for the variables $y$ accounts for the fact that a vehicle not only needs to visit a cluster, but will also travel back to the corresponding facility afterwards, resulting in a better estimation of the routing costs.

A solution to IP \eqref{ip_cfl} not only yields a set of facilities to be opened (those $w \in \depots$ with $z(w) = 1$) but also an assignment of client clusters to open facilities, which can be used in Step~3 of the algorithm.
Even more, such an assignment also leads to a feasible \LRVDC{} solution, as capacity constraints are satisfied due to the first constrained of \eqref{ip_cfl}.

As with the preceding modification, this approach benefits from the fact that the number of clusters is significantly smaller than the number of clients in the original input instance. Thus solving the IP is tractable at least for moderately sized \LRVDC{} instance.
We remark, however, that contrary to LP~\eqref{LP_trans}, which is always feasible when the original \LRVDC{} instance is feasible, there is no such guarantee for IP~\eqref{ip_cfl}. To address this issue, we suggest the following hierarchical approach: Whenever IP~\eqref{ip_cfl} turns out to be infeasible, find the smallest $\gamma > 1$ such that the 
IP~\eqref{ip_cfl} with the right hand-side of the first constraint replaced by 
$\gamma u(w)z(w)$ is feasible and proceed with an optimal solution to this modified~IP.

\subsection{Post-optimization of Tours via LKH}\label{sec:lkh}
While typical \LRVDC{} instances may consist of a large number of clients, individual tours usually only contain a small number of clients due to the limited vehicle capacity.
For such small sets of clients, optimal or almost-optimal tours can be found efficiently using LKH, the implementation of the Lin-Kernighan heuristic by \cite{helsgaun2000effective}, in Step~3 of the algorithm.
\section{Computational Study}
\label{sec:computational}

In this section, we assess the practical performance of our algorithm in a computational study on a great variety of instances. We investigate the quality of produced solutions and the scalability of the approach.

\subsection{Implementation Details}\label{sec:impdet}

In our experiments, we tested four variants of our algorithm, differing in the implementation of Step~2 and the use of LKH post-optimization:

\medskip

\begin{itemize}
    \item $A_{\text{LS}}^{\text{DTS}}$: the algorithm described in \cref{sec:algorithm}, but using the improvements described in \cref{sec:free_fac,sec:reduce_size}, i.e., the \CFL{} instance in Step~2 is constructed with modified facility costs and clustered clients; $3$-approximation local search procedure by \citet{AggarwalLouisBansalGargGuptaGuptaJain2013} is used to solve the \CFL{} instance in Step~2
    \item $A_{\text{IP}}^{\text{DTS}}$: the algorithm described in \cref{sec:algorithm}, but using the assignment IP \eqref{ip_cfl} to replace Step~2, as described in \cref{sec:cfl_ip}
    \item $A_{\text{LS}}^{\text{LKH}}$: same as $A_{\text{LS}}^{\text{DTS}}$, but the tours are optimized using LKH as described in \cref{sec:lkh}
    \item $A_{\text{IP}}^{\text{LKH}}$: same as $A_{\text{IP}}^{\text{DTS}}$, but the tours are optimized using LKH as described in \cref{sec:lkh}
\end{itemize}

\medskip

Parameter $\varepsilon$ was fixed to $1$. Preliminary experiments showed that, not surprisingly, smaller values of $\varepsilon$ reduce capacity violations but increase solution cost quite drastically. As the IP-based algorithms $A_{\text{IP}}^{\text{DTS}}$ and $A_{\text{IP}}^{\text{LKH}}$ also produce feasible solutions in almost all cases, complete experiments with smaller values of $\varepsilon$ were omitted.

The algorithm was implemented in Python~3.7.3, using Gurobi~8.1.1 to solve subproblems~\eqref{LP_trans} and \eqref{ip_cfl}, respectively. To avoid excessive execution times on larger instances, we use Gurobi's \emph{TimeLimit} option, which returns the best found solution within the specified time interval.
We also preemptively terminate the Gurobi's optimization process if the incumbent solution remains unchanged for a specified period. Similarly, we interrupt the local search procedure for \CFL{} if the whole procedure exceeds a specified time limit, or if the current solution cannot be improved within a certain time limit. The exact values of these time limits were set according to instance size and are specified in \cref{sec:test_instances}. 

Extensive use of Python's \textit{networkx} library was made, although avoiding methods that rely on randomness or arbitrariness in order to obtain reproducible results. We also used the LKH implementation available in the \textit{elkai} library.
The experiments were run on an AMD Ryzen 5 PRO 2400G processor at 3.6 GHz, with 16 GB RAM.

We point out that the main goal of our computational experiments is providing a proof-of-concept, showing that the approach used to obtain the approximation result can be the basis of a capable and scalable heuristic. We do not necessarily aim at obtaining competitive results, which is why we refrained from tuning the implementation for speed or using more time-efficient methods for solving the \CFL{} instances in Step~2 of the algorithm.

\subsection{Test Instances} \label{sec:test_instances}

We tested the algorithms both on benchmark instances previously used in literature as well as on newly generated random instances.

\subsubsection*{Benchmark instances from literature}
Many instance sets for \LRVDC{} have been published in literature over the previous decades.
Among them, instance sets provided by \citet{BarretoFerreiraPaixaoSousasantos2007}, \citet{PrinsProdhonWolflercalvo2006}, and \citet{TuzunBurke1999} have emerged as a standard benchmark for \LRVDC{} algorithms. 
The first one consists of $13$ instances with both facility and vehicle capacity. The number of clients ranges from $21$ to $150$, and the amount of possible facility locations ranges from $5$ to $14$. \citeauthor{PrinsProdhonWolflercalvo2006}'s collection contains $30$ instances in which the amount of clients ranges from $20$ to $200$, and the number of potential facility locations ranges from $5$ to $10$. Again, both vehicles and facilities have limited capacity. Finally, \citeauthor{TuzunBurke1999}'s set is the largest, with $36$ instances. Vehicles have limited capacity, while the capacity of each facility is equal to the total demand of all clients, making them uncapacitated in practice. The amount of clients of these instances lies between $100$ and $200$, while the number of possible locations lies between $10$ and $20$. 
We compare our algorithms against the best known solution (BKS) values for these instances, as reported by \citet{SchneiderDrexl2017}. We will denote this set of $79$ instances by \textbf{PBT}. 

In addition, we included the collection created by \citet{SchneiderLoffler2019}, which contains slightly larger instances. In the $202$ instances included in this set,  the number of facilities ranges from $5$ to $30$, and the number of clients lies between $100$ and $600$, following a similar structure as the instances by \citet{PrinsProdhonWolflercalvo2006}. This set not only allows us to test our method on bigger instances than the previous standard, but \citeauthor{SchneiderLoffler2019} also provide best known solutions which we can compare our results with. We will denote this instance collection by \textbf{S\&L}.

\subsubsection*{Additional randomly generated instances}
To assess the scalability of our approach, we created additional randomly generated instances with the number of clients ranging between $50$ and $10000$. In generating these instances, we slightly adapted the procedures specified by \citet{TuzunBurke1999} and by \citet{SchneiderLoffler2019}. The main components of this process were the following:

\begin{itemize}
    \item The number of clients of an instance is specified by a value from the following range: 50, 100, 150, 200, 300, 400, 500, 600, 700, 800, 900, 1000, 2500, 5000, or 10000. For a given instance, we will refer to this value as the \textit{instance size} $n$. The number of potential facility locations in any instance of size $n$ is simply given by $\nicefrac{n}{20}$, except when the number of clients is either $50$ or $150$, in which case the number of possible facility locations equals $5$ and $10$, respectively.
    \item A number of conglomerates is specified from the set $\{0, 3, 5\}$. This number is used to randomly generate the distance metric described below.
    \item The vehicle capacity of an instance is specified from the set $\{70, 150, 300\}$.
    \item The uniform facility capacity is specified from the set  $\{400, 600, 1200\}$. 
    \item For each instance, one of three possible ranges for the facility costs is chosen: $[2, 4]$, $[200, 400]$, or $[20000, 40000]$. The cost of each facility is drawn uniformly at random from the chosen range. This leads to instances in which the opening costs are either considerably lower than the routing costs, comparable to them, or considerably higher.
    
    \item  To determine the (Euclidean) distances, clients and facilities are randomly assigned a point in $[0, 1000]^2$ as follows: If the number of conglomerates specified in the input is $0$, all clients and facilities are placed uniformly and independently at random in this region.
    If the number of conglomerates is $3$ or $5$, the region $[0, 1000]^2$ is divided in a $3 \times 3$-grid of $9$ cells, of which $3$ or $5$, respectively, are chosen as \emph{conglomeration areas}. Then, $80\%$ of the clients and $80\%$ of the facilities are randomly distributed across these conglomeration areas, and each one of them is assigned uniformly at random a coordinate in the corresponding cell. The other $20\%$ of clients and facilities are then equally distributed across the remaining cells, and each one of them is assigned uniformly at random a coordinate in the corresponding cell as well.
    
    \item The demand of each client is drawn independently and uniformly from the interval $[10, 20]$. 
\end{itemize}

Each instance created is named $n$-$k$-$abc$, where $n$ stands for the number of clients, $k$ for the number of client conglomerates, and $a, b, c \in \{\text{s}, \text{m}, \text{l}\}$, referring to small, medium, or large vehicle capacity, facility costs, or facility capacity, respectively.
According to this scheme, we considered all possible combinations for any given value of $n \leq 1000$ and create $3^4 = 81$ instances for each such value. For the larger instances (size $2500$ to $10000$), we used Taguchi's orthogonal array design to cover possible combinations of the parameters in a fractional factorial design with $9$ combinations that we tested for $n \in \{2500, 5000, 10000\}$; see \cref{tab:exp} for the detailed design. 

In total, we thus obtained $999$ instances. We partition this collection into four sets according to instance size: \textbf{S} for instances with $50 \leq n \leq 200$, \textbf{M} for instances with $300 \leq n \leq 600$, \textbf{L} for instances with $700 \leq n \leq 1000$, and \textbf{XL} for instances with $2500 \leq n \leq 10000$.

\begin{table}[t]
\centering
\begin{tabular}{|c c c c|} 
    \hline
    \# of clusters & facility cost & vehicle capacity & facility capacity \\ 
    \hline\hline
    \multirow{3}{*}{0} & s & s & s \\
    & m & m & m \\
    & l & l & l \\
    \hline
    \multirow{3}{*}{3} & s & m & l \\  
    & m & l & s \\ 
    & l & s & m \\ 
    \hline
    \multirow{3}{*}{5} & s & l & m \\ 
    & m & s & l \\ 
    & l & m & s \\
    \hline
\end{tabular}
\caption{Experimental design for XL instances.}
\label{tab:exp}
\end{table}

\subsubsection*{Lower bounds for additional instances} \label{sec:best_lb}
Since no previous solutions are available for the newly generated instances, we use the lower bounds described \cref{lem:mst_lb,lem:cfl_lb} to measure the quality of the produced solutions. 
During the clustering step of the algorithm, the corresponding MST needs to be computed, providing us with the lower bound specified in \cref{lem:mst_lb}. 
Additionally, we compute solutions to the \CFL{} instance described in \cref{lem:cfl_lb} for every instance, resulting in the second lower bound.
For instances of size S to L, we compute exact solutions to these \CFL{} instances using the standard integer programming formulation for \CFL{}.
In the case of the XL instances, due to memory constraints, we used a sparsification approach to reduce the size of the integer programming formulation\footnote{This procedure consists in dividing the area in smaller cells, and letting clients only be directly served by facilities in neighboring cells. Each cell additionally contains an artificial hub that can re-distribute the supply of facilities that are located further away. On instances for which the exact solution to \CFL{} could be computed, this sparsification led to a mild deterioration of at most $2.4\%$ in the computed lower bounds.}.
For each instance, the greater of these two lower bounds (\CFL{} or MST) is chosen as a reference. The gap is then computed as $\Delta_{\text{LB}} = \nicefrac{(\text{ALG} - \text{LB})}{\text{LB}}$, where $\text{ALG}$ and $\text{LB}$ denote the cost of the solution produced by the algorithm and the lower bound, respectively.

\subsubsection*{Time limits by instance size}
As mentioned in \cref{sec:impdet}, time limits were set to avoid excessive computation times for large instances. The following time limits were set depending on the instance size:

\begin{itemize}
\item For improving the incumbent solution in either the local search or while solving the integer program in Step~2, time limits of $60$, $90$, and $120$ minutes were set for instances in sets M, L, and XL, respectively.
\item For the entire solution process of either the local search or solving the integer program in Step~2, time limits of $180$, $270$, and $360$ minutes were set for instances in sets M, L, and XL, respectively.
\item No time limits were set for the instances belonging to sets PBT, S\&L, and S.
\end{itemize}

\subsection{Results}

In this section, we present and analyze the results of our computational experiments. 
We focus on the results obtained by $A_{\text{LS}}^{\text{DTS}}$ and $A_{\text{IP}}^{\text{LKH}}$. While the former is the closest representation of the bifactor approximation described in \cref{sec:algorithm}, the latter results in the best solution quality. We discuss the differences between all four variants at the end.

\subsubsection*{Solution quality on benchmark instances}

For the benchmark instances from literature, we compare the solutions obtained by our algorithms to the best known solution for the respective instance. We report the gap $\Delta_{\text{BKS}} = \nicefrac{(\text{ALG} - \text{BKS})}{\text{BKS}}$, where $\text{ALG}$ and $\text{BKS}$ denote the cost of the solution produced by the algorithm and the best known solution, respectively.
For algorithm $A_{\text{LS}}^{\text{DTS}}$, we observed average gaps of $18.64\%$ ($22.49\%$ when restricted to those cases where the solution of the algorithm does not violate any facility capacities) on the instance set PBT and $7.74\%$ ($13.97\%$ when restricted to feasible solutions) on the instance set S\&L. Algorihtm $A_{\text{IP}}^{\text{LKH}}$ computed feasible solutions on for all instances of both sets, with average gaps of $11.69\%$ for PBT and $5.23\%$ for S\&L.
See \cref{fig:bksgap} for a detailed depiction of the results.
We observe that the algorithm thus computes solutions much closer to best-known (in many cases proven optimal) solutions than guaranteed by our worst-case analysis in~\cref{sec:algorithm}. 
For $14$ of the S\&L instances, $A_{\text{IP}}^{\text{LKH}}$ even computes new best known solutions.

\begin{figure}[t]
    \centering
    \includegraphics[width=0.7\linewidth]{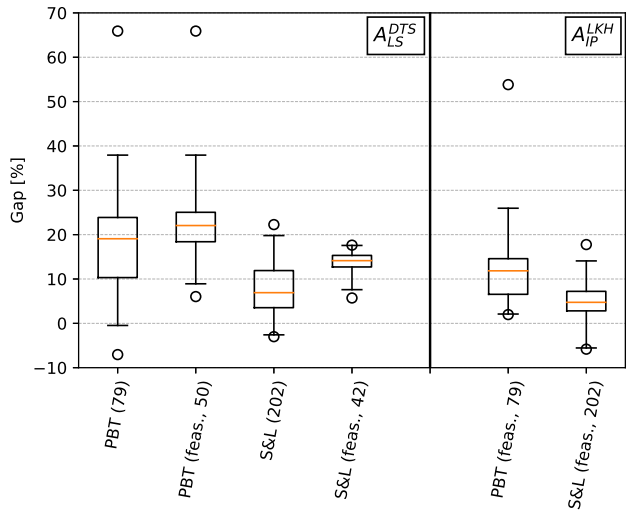}
    \caption{Gap to best known solution for instances in the benchmark set.
    For algorithm $A_{\text{LS}}^{\text{DTS}}$, the results for all instances in the respective set and the results for only those instances in the set where the algorithm produced a feasible solution (i.e., no violation of capacity limits; indicated by `feas.') is shown. Numbers in parentheses indicate the respective number of instances. Algorithm $A_{\text{IP}}^{\text{LKH}}$ produced feasible solutions for all instances in the benchmark set.} 
    \label{fig:bksgap}
\end{figure}

\subsubsection*{Solution quality on newly generated instances}

Using the lower bounds described in \cref{sec:best_lb}, we were able to measure the performance of our algorithm on our newly created instances. We observe that the distribution of the lower bound gap is not significantly affected by instance size, with global average gaps of $63.88\%$ ($81.80\%$ when restricted to feasible solutions) for $A_{\text{LS}}^{\text{DTS}}$ and $56.44\%$ ($55.61\%$) for $A_{\text{IP}}^{\text{LKH}}$ (see \cref{fig:lbgap}). These gaps are all considerably lower than the worst-case guarantee of the algorithm. Additionally, $A_{\text{IP}}^{\text{LKH}}$ results in several feasible near-optimal solutions, with gaps as low as $2.15\%$.

Among the infeasible solutions to the instances in set XL, two outliers were left out of \cref{fig:lbgap}: One solution with a gap of $388.25\%$ computed by $A_{\text{LS}}^{\text{DTS}}$ and one with a gap of $416.38\%$ computed by $A_{\text{IP}}^{\text{LKH}}$. In both cases, the respective procedures in Step~2 of the algorihtm (LS or integer program, respectively) were interrupted while the incumbent solutions were still frequently being improved. Nevertheless, both values are still below the guarantee provided by the theoretical analysis.

As we will discuss below, most instances in the set XL resulted in a premature termination of the algorithm due to the time limit.
However, with the exception of the two outliers mentioned above, this had no significant influence on the lower bound gaps, which exhibit practically the same distribution for XL as for S, M, and L. This robustness against early interruption of the optimization process in Step~2 can be seen as another indication for the scalability of our approach.

\begin{figure}[t]
    \centering
    \includegraphics[width=0.7\linewidth]{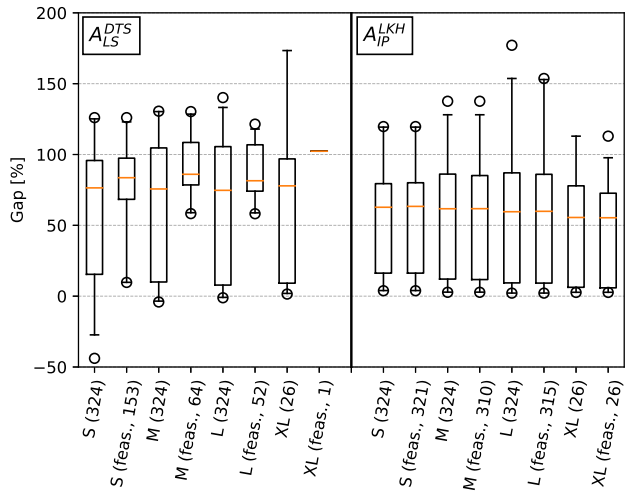}
    \caption{Gap to lower bound for newly generated instances. 
    For each algorithm, the results for all instances in the respective set and the results for only those instances in the set where the algorithm produced a feasible solution (indicated by `feas.') is shown. Numbers in parentheses indicate the respective number of instances.
    One outlier for the instance set XL was omitted for each of the two algorihtms.}
    \label{fig:lbgap}
\end{figure}

\subsubsection*{Running time}

The average running time fo instances in PBT was $0.42$ seconds for $A_{\text{LS}}^{\text{DTS}}$ and $0.34$ seconds for $A_{\text{IP}}^{\text{LKH}}$. The average running time for instances in S\&L was $7.71$ seconds for $A_{\text{LS}}^{\text{DTS}}$ and $15.12$ seconds for $A_{\text{IP}}^{\text{LKH}}$. The average running time for the instances in S, M, and L was $31.88$ seconds for $A_{\text{LS}}^{\text{DTS}}$ and $254.52$ seconds for $A_{\text{IP}}^{\text{LKH}}$, while, on average, the instances in XL required $294.09$ minutes to be solved with $A_{\text{LS}}^{\text{DTS}}$ and $132.61$ minutes to be solved with $A_{\text{IP}}^{\text{LKH}}$.

Each instance in PBT was solved in less than $1.30$ seconds by $A_{\text{LS}}^{\text{DTS}}$ and less than $1.50$ seconds by $A_{\text{IP}}^{\text{LKH}}$. Each instance in S\&L could be solved in less than $52.20$ seconds by $A_{\text{LS}}^{\text{DTS}}$ and less than $859.90$ seconds by $A_{\text{IP}}^{\text{LKH}}$, with $90\%$ of them actually being solved in less than $20.64$ seconds by $A_{\text{LS}}^{\text{DTS}}$ and less than $19.68$ seconds by $A_{\text{IP}}^{\text{LKH}}$.
Furthermore, $A_{\text{LS}}^{\text{DTS}}$ solved all of the instances in S, M, and L within the set time limits, while $A_{\text{IP}}^{\text{LKH}}$ solved $96.30\%$ of those instances within the time limit.
Among the instances in XL, $A_{\text{LS}}^{\text{DTS}}$ solved $33.33$\% of the instances within the time limit, while $A_{\text{IP}}^{\text{LKH}}$ solved $59.26\%$.

We point out that the instances in sets L and XL are considerably larger than those considered in the \LRVDC{} literature so far, with XL being an order of magnitude larger.
The main computational bottleneck of our method is Step~2, i.e., approximating the \CFL{} instance in case of $A_{\text{LS}}^{\text{DTS}}$ or solving IP~\eqref{ip_cfl} in case of $A_{\text{IP}}^{\text{LKH}}$. 
As pointed out earlier, we did not optimize our implementation for speed. Replacing the local search with another solution approach to \CFL{}, or speeding up the solution process for IP~\eqref{ip_cfl} by adding valid inequalities or employing a decomposition approach, respectively, could significantly speed up the solution process and put even larger instance sizes within reach.

Finally, we remark that the measured time for applying the double tree algorithm or LKH to compute the tours was negligible throughout our experiments.

\subsubsection*{Facility loads}

In our experiments, $918$ out of $1280$ ($71.72\%$) solutions computed by $A_{\text{LS}}^{\text{DTS}}$ exceed the capacity at at least one facility. In these solutions, on average, an overloaded facility serves an excessive demand of $13.24\%$ of its capacity on average and $73.50\%$ at maximum (on some of the instances where vehicle capacities are close to facility capacities). 
Over all solutions computed by $A_{\text{LS}}^{\text{DTS}}$, $27.27\%$ of the open facilities had to serve an excess of demand.

For $A_{\text{IP}}^{\text{LKH}}$ on the other hand, facility capacities were only exceeded in $27$ out $1280$ ($2.11\%$) instances, with an average excess of $3.79\%$ of capacity among all overloaded facilities and a worst-case of $15.25\%$. On average, only $0.22\%$ of the open facilities had to serve an excess of demand when applying this variant of the algorithm. 
We conclude that the use of the IP for assigning clusters to facilities results in feasible solutions in the vast majority of cases and that in the cases where it does result in feasible solutions, a small extension of the capacity of individual facilities suffices.

\subsubsection*{Comparing LS and IP} 
To measure the direct effect of the assignment method used in Step~2, we compare the variant using local search with that using the IP using the same routing heuristic, respectively.
Across all instances, the average increase in costs when using $A_{\text{IP}}^{\text{DTS}}$ with respect to the costs obtained when using $A_{\text{LS}}^{\text{DTS}}$ is $2.58\%$. Similarly, the average increase in costs when using $A_{\text{IP}}^{\text{LKH}}$ with respect to the costs obtained when using $A_{\text{LS}}^{\text{LKH}}$ is $2.76\%$. 
However, recall that most solutions produced by $A_{LS}^{X}$ are infeasible for $X \in \{DTS, LKH\}$ (feasibility is not affected by the routing heuristic). If we restrict our analysis to instances in which both $A_{LS}^{X}$ and $A_{IP}^{X}$ result in feasible solutions, the effect is reversed: across all instances, the average decrease in costs when using $A_{\text{IP}}^{\text{DTS}}$ with respect to the costs obtained when using $A_{\text{LS}}^{\text{DTS}}$ is $0.19\%$. The same average decrease in costs is obtained when using $A_{\text{IP}}^{\text{LKH}}$ instead of $A_{\text{LS}}^{\text{LKH}}$.

We conclude that, on average, costs are not significantly affected by the choice of assignment heuristic in Step~2. As we saw earlier, running times tend to increase when using $A_{IP}^{X}$ instead of $A_{LS}^{X}$. However, $A_{IP}^{X}$ computes feasible solutions in most cases, making it a viable candidate for use in practice. This difference can also be observed in \cref{fig:scatter}: Almost all solutions produced by $A_{\text{IP}}^{\text{DTS}}$ lie left to the $0$-excess line, or at least close to it.

\begin{figure}[t]
    \centering
    \includegraphics[width=0.7\linewidth]{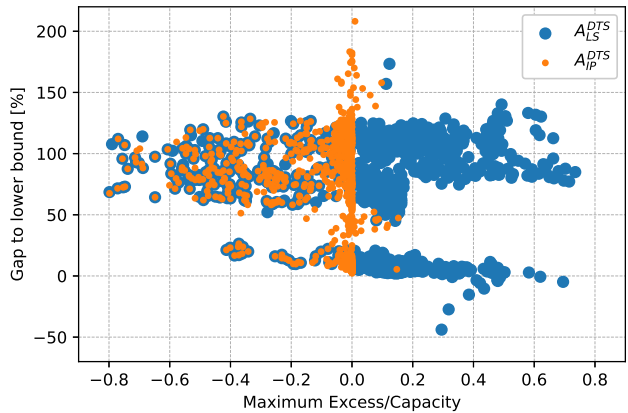}
    \caption{Maximum excessive load vs. gap to lower bound. Each point corresponds to a solution: All instances from all sets are separately solved with $A_{\text{LS}}^{\text{DTS}}$ and $A_{\text{IP}}^{\text{DTS}}$. The vertical coordinate denotes the gap to the lower bound for the solution. The horizontal coordinate denotes the largest excessive demand of a facility relative to its capacity in the respective solution. The same two outliers excluded from \cref{fig:lbgap} were excluded here as well.}
    \label{fig:scatter}
\end{figure}

\subsubsection*{Effect of post-optimization}

Across all instances and methods, the average routing cost improvement obtained after applying LKH is $8.20$\%.
As mentioned earlier, computation times for applying this post-optimization step are negligible.

\subsubsection*{Lower bounds}

In $70.9$\% of all instances, \CFL{} resulted in a better lower bound than MST. Moreover, \CFL{} lower bounds are on average $459.83\%$ larger than MST lower bounds. In general, one would expect \CFL{} to result in a better bound when facility costs are considerably higher than routing costs, in which case the costs of the \CFL{} solution should provide a good estimation of the actual \LRVDC{} costs. If routing costs are considerably higher than facility costs instead and when vehicle capacity utilization is low, MST is expected to provide a better estimates. Thus, a possible explanation for the better performance of the \CFL{} lower bound in our study is that facility costs are significantly higher than routing costs for most instances. In fact, this is also a plausible explanation for the clustering that can be observed in \cref{fig:scatter}: The lower cluster (below a threshold of approximately $25\%$ on the vertical axis) coincides with the instances for which facility costs are classified as ``large''. For larger facility costs, the lower bound provided by \CFL{} provides an increasingly better approximation to the optimal cost of the corresponding \LRVDC{} instance.

\subsubsection*{Summary of main findings}
The computational results for $A_{\text{LS}}^{\text{DTS}}$ indicate that the algorithm largely outperforms the worst-case guarantee proven in \cref{sec:algorithm}.
Moreover, using the heuristic improvements discussed in \cref{sec:variants}, variant $A_{\text{IP}}^{\text{LKH}}$ obtains solutions that do not exceed any facility capacities for the vast majority of instances. On the benchmark instances the produced solutions are within approximately 10\% of the best-known solutions on average. The experiments on newly generated instances provide evidence that these results can be extended to larger instance sizes as the performance relative to lower bounds remains the same from small instance sizes (comparable to the benchmark instances) to very large instance sizes. We conclude that the algorithmic approach can serve as a basis for a fast heuristic that allows computation of high-quality solutions even on very large instance sizes.

\section{Conclusions and Outlook}
\label{sec:conclusions}

In this paper, we proposed a bifactor approximation for \LRVDC{}, with worst-case guarantees for both capacity utilization and costs. The algorithm merges a minimum spanning tree in the input graph and a \CFL{} solution with respect to modified connection costs, which are both lower bounds for optimal \LRVDC{} solutions.

An interesting open question is the existence of a constant-factor approximation for \LRVDC{} that strictly adheres to the facility capacities.
The question for such an algorithm was already raised by \cite{RaviSinha2006}. \cite{chen2009approximation,chen2009cost} gave a positive answer for soft-capacitated variants of the problem. Our theoretical results can be seen as a natural next step towards an approximation algorithm strictly respecting hard capacities.
As demonstrated by the example given in the proof of \cref{lem:lower_bounds_not_enough}, however, such an algorithm would require the use of new and stronger lower bounds on the optimal solution value, which appear to be a natural starting point for future research. Furthermore, our computational experiments indicate that our approach can also serve as an efficient heuristic for use in practice. While the IP variant of the algorithm obtains feasible solutions in most cases, future research will hopefully reveal additional heuristic methods that allow to achieve feasibility for the remaining cases in which small capacity violations at some facilities still occur, thus providing implementable solutions for those application contexts in which small capacity extensions are not realizable.

\bibliography{bibliography_v1}

\end{document}